\documentclass[10pt,a4paper]{article}
\usepackage[a4paper, total={6in, 8in}]{geometry}
\makeatother

\usepackage{bbm}
\usepackage{tabu}
\usepackage{amsmath}
\usepackage{amsthm}
\usepackage{breqn}
\usepackage{verbatim}
\usepackage{graphicx}
\usepackage{caption}
\usepackage{subcaption}
\usepackage{esvect}
\usepackage{cite}
\usepackage{amssymb}
\usepackage{footnote}
\usepackage{color}
\usepackage{url}
\usepackage{xr}
\usepackage[makeroom]{cancel}
\usepackage{microtype}
\usepackage{dsfont}
\theoremstyle{plain}
\newtheorem{theorem}{Theorem}[section]
\newtheorem{corollary}[theorem]{Corollary}
\newtheorem{definition}[theorem]{Definition}
\newtheorem{lemma}[theorem]{Lemma}
\newtheorem{proposition}[theorem]{Proposition}
\newtheorem{remark}{Remark}
\newtheorem*{problem statement}{Problem Statement}
\newtheorem{feasibility problem}{Problem}

\newtheorem{assumption}{Assumption}
\definecolor{myred}{RGB}{108,0,0}
\graphicspath{{./Figures/}}

\pdfminorversion=4

\DeclareMathOperator*{\argmin}{arg\,min}
\newcommand{\real}{\mathbb{R}}
\newcommand{\reg}{\mathcal{R}}
\newcommand{\sreg}{\mathcal{S}}

\newcommand{\algeb}{\mathcal{F}}
\newcommand{\pr}{\mathbb{P}}
\newcommand{\borel}{\mathcal{B}}
\newcommand{\normal}{\mathcal{N}}
\newcommand{\expec}{\mathrm{E}}
\newcommand{\cov}{\mathrm{Cov}}

\newcommand{\sys}{\mathrm{S}}
\newcommand{\proj}{\textbf{proj}}
\newcommand{\uns}{q_{\mathrm{abs}}}
\newcommand{\imc}{\mathrm{imc}}

\newcommand{\Y}{\mathrm{Y}}

\newcommand{\adv}{\boldsymbol{\pi}}
\newcommand{\cum}{\mathrm{cum}}
\newcommand{\mul}{\mathrm{mul}}
\newcommand{\avg}{\mathrm{avg}}

\newcommand{\ind}{\mathds{1}}
\newcommand{\tzeta}{\tilde{\zeta}}
\begin{document}
	\title{Formal Analysis of the Sampling Behaviour of Stochastic Event-Triggered Control}
	\author{Giannis Delimpaltadakis, Luca Laurenti, and Manuel Mazo Jr.
		\thanks{The authors are with the Delft University of Technology, The Netherlands. Emails:\{i.delimpaltadakis, l.laurenti, m.mazo\}@tudelft.nl. This work is partially supported by ERC Starting Grant SENTIENT (755953).}}
	\date{}
	\maketitle
	\begin{abstract} 
		Analyzing Event-Triggered Control's (ETC) sampling behaviour is of paramount importance, as it enables formal assessment of its sampling performance and prediction of its sampling patterns. In this work, we formally analyze the sampling behaviour of stochastic linear periodic ETC (PETC) systems by computing bounds on associated metrics. Specifically, we consider functions over sequences of state measurements and intersampling times that can be expressed as average, multiplicative or cumulative rewards, and introduce their expectations as metrics on PETC's sampling behaviour. We compute bounds on these expectations, by constructing appropriate Interval Markov Chains equipped with suitable reward structures, that abstract stochastic PETC's sampling behaviour. Our results are illustrated on a numerical example, for which we compute bounds on the expected average intersampling time and on the probability of triggering with the maximum possible intersampling time in a finite horizon.
	\end{abstract}
	\section{Introduction}
	In the past two decades, Event-Triggered Control (ETC), followed by its sibling Self-Triggered Control, has constituted the primary research focus of the control systems community towards reducing resource consumption in Networked Control Systems \cite{astrom2002comparison,tabuada2007etc,heemels2012periodic,postoyan2014framework,girard2015dynamicetc,dynamic_stochastic_etc,stochastic_petc,stochastic_etc_delay,on_etc_stochastic_tac2020,demirel2016trade,postoyan_interevent,tallapragada,gabriel_hscc,arman_formal_etc,gleizer2020scalable,delimpaltadakis_cdc_2020,delimpaltadakis2020abstracting}. ETC is a sampling paradigm, where communication between the sensors and the controller takes place only when a state-dependent triggering condition is satisfied. Even though ETC's event-based sampling typically reduces the amount of communications (compared to conventional periodic sampling), it also generates an erratic and generally a-priori unknown \emph{sampling behaviour}. 
	Understanding and predicting ETC's sampling behaviour is of paramount importance as it enables: a) forecasting an ETC design's sampling performance (e.g., how frequently or erratically an ETC system is expected to sample), and b) scheduling data traffic\footnote{Traffic scheduling is much more trivial with periodic sampling, where the generated traffic is known beforehand.} in networks shared by multiple ETC loops, to avoid packet collisions.
	
	Even though analyzing ETC's sampling behaviour is fundamental, research around it is scarce \cite{demirel2016trade,postoyan_interevent,tallapragada,arman_formal_etc,gabriel_hscc,gleizer2020scalable,delimpaltadakis_cdc_2020,delimpaltadakis2020abstracting}. One branch of it composes of analytic approaches \cite{demirel2016trade,postoyan_interevent,tallapragada}. In particular, \cite{demirel2016trade} studies deadbeat stochastic linear PETC (periodic ETC, e.g. \cite{heemels2012periodic}; a practical variant of ETC) systems. Owing to deadbeat control, studying the sampling behaviour of the system simplifies to analyzing a Markov chain, which can be used to compute quantitative metrics over the sampling performance. Nevertheless, assuming deadbeat control is admittedly restrictive. Furthermore, \cite{postoyan_interevent} and \cite{tallapragada} investigate asymptotic properties of the \emph{intersampling times} of 2-D linear ETC systems with quadratic triggering conditions. Despite the interesting insights, these works also suffer from certain limitations: a) they consider only planar systems, b) they are dependent on the type of triggering condition considered, and most importantly c) they do not provide quantitative information on all possible sampling patterns that may be exhibited by an ETC system; as such, they may not be employed for e.g. computing metrics on ETC's sampling performance or predicting its sampling patterns.
	
	The other branch of research studying ETC's sampling behaviour is \emph{abstraction}-based approaches \cite{gabriel_hscc,arman_formal_etc,gleizer2020scalable,delimpaltadakis_cdc_2020,delimpaltadakis2020abstracting}, among which the present work is placed. These construct finite-state systems, abstracting a given ETC system's sampling: the set of their traces contains all possible sequences of intersampling times that may be exhibited by the ETC system. Contrary to the analytic approaches, they do not place restrictive assumptions on the type of controller, and they depend neither on the system dimensions (modulo computational complexity) nor on the triggering condition (except that minor steps in the abstraction's construction might vary). Meanwhile, in exchange for high computational complexity, they provide quantitative information on ETC's diverse sampling patterns, allowing for computing performance metrics and predicting sampling patterns. For example, \cite{arman_formal_etc,gleizer2020scalable,delimpaltadakis_cdc_2020,delimpaltadakis2020abstracting} construct abstractions in the context of ETC traffic scheduling, while \cite{gabriel_hscc} utilizes them to compute the minimum average intersampling time.
	
	Thus far, in \cite{arman_formal_etc,gleizer2020scalable,gabriel_hscc,delimpaltadakis_cdc_2020,delimpaltadakis2020abstracting} only non-stochastic systems have been considered. Specifically, \cite{arman_formal_etc,gleizer2020scalable,gabriel_hscc} consider linear ETC and PETC systems, whereas \cite{delimpaltadakis_cdc_2020} and \cite{delimpaltadakis2020abstracting} address nonlinear systems with bounded disturbances. In this work, we consider \emph{stochastic} systems, for the first time. 
	Among others, for verification purposes, the probabilistic framework of stochastic systems is naturally less strict than the deterministic one, as it takes into account the disturbances' probability distribution, instead of being bound by worst case scenarios.
	
	In particular, we consider stochastic narrow-sense linear PETC systems. We define their sampling behaviour as the set $\Y$ of all possible sequences of state-measurements and intersampling times along with its associated probability measure. Studying ETC's sampling behaviour is formalized by computing expectations of functions defined over these sequences $g:\Y\to \real$. Here, we focus on functions $g$ described as cumulative, average or multiplicative rewards, i.e. $g_\star$ with $\star\in\{\cum,\avg,\mul\}$. This class of functions is rather standard in the context of quantitative analysis of stochastic systems, and it extends to including specifications of PCTL (Probabilistic Computation Tree Logic, see \cite{morteza2019imdp}). Besides, it is able to describe various metrics on ETC's sampling performance, as demonstrated through examples. In fact, as shown in one example, including state-measurements in the sampling behaviour's definition allows for incorporating control-performance metrics as well. The problem statement of this work is to obtain bounds on expectations of functions $g_\star$. 
	
	To address the problem, we construct IMCs (interval Markov chains; Markov chains with interval transition probabilities) that capture PETC's sampling behaviour. Then, we equip the IMCs with appropriate state-dependent rewards and prove that the $\{\cum,\avg,\mul\}$ reward over the paths of the IMC indeed bounds the expectation of $g_\star$ (Theorem \ref{main_theorem}). The IMC rewards can easily be computed via well-known value-iteration algorithms (see, e.g., \cite{givan_bmdps}). 
	
	The main challenge in constructing the IMC is computing the IMC's probability intervals. For that, we study the joint probabilities of transitioning from one region of the state-space to another one with the intersampling time taking a specific value. Computation of these probabilities is more complicated than the traditional transition probabilities that appear in the literature of IMC-abstractions (e.g., \cite{lahijanian2015dt_imcs,coogan2020,luca_tac_2021,jackson2021strategy}), due to the presence of intersampling time as an event. To cope with that, we employ a series of convex relaxations and the fact that the system's state is a Gaussian process. That way, we reformulate computing these probabilities as optimization problems of log-concave objective functions and hyperrectangle constraint sets, which are easy to solve. Finally, our results are demonstrated through a numerical example, where we compute bounds on the expected average intersampling time and on the probability of triggering with the maximum possible intersampling time in a finite horizon.
	
	In summary this work's main contributions are:
	\begin{itemize}
		\item It is the first one to abstract the sampling behaviour of stochastic ETC. 
		\item It computes bounds on performance metrics over stochastic PETC's sampling behaviour, allowing for its formal assessment and prediction of its patterns.
	\end{itemize}
	A preliminary version of the present work was presented in \cite{delimpaltadakis2021abstracting}. In \cite{delimpaltadakis2021abstracting}, only cumulative rewards are addressed and the derived upper and lower bounds on transition probabilities are different (here, they are tighter). Furthermore, the proof of Lemma \ref{lemma:log-concave}, which shows log-concavity of our optimization problems' objective functions, appears here for the first time. Finally, the proof of Theorem \ref{main_theorem} here is more elaborate than the proof of \cite[Theorem IV.1]{delimpaltadakis2021abstracting}; it argues about any horizon employing time-varying adversaries, whereas \cite[Theorem IV.1]{delimpaltadakis2021abstracting} argues about infinite-horizons, where time-invariant adversaries suffice.

	\section{Preliminaries}
	\subsection{Notation}
	$\real$ stands for the set of real numbers, $\mathbb{N}$ for the natural numbers including 0, and $\mathbb{N}_+$ without 0. Given $X\subseteq\real$, $X_{[a,b]}=X\cap[a,b]$. $I_n$ is the $n$-dimensional identity matrix. Given a set $S$ in some space $X$, we denote: its indicator function by $\ind_S(\cdot)$, its Borel $\sigma$-algebra by $\mathcal{B}(S)$, its complement by $\overline{S} = X\setminus S$, and the $k$-times Cartesian product $S=S\times\dots\times S$ by $S^k$. 
	Given $x\in\real^n$, denote by $\{x\}^k$: both the $k$-times Cartesian product $\{x\}\times\dots\times\{x\}$ and the $kn$-dimensional vector $\begin{bmatrix}
		x^\top &\dots &x^\top
	\end{bmatrix}^\top$. Given sets $Q_1,Q_2$ and $Q=Q_1\times Q_2$, for any $q=(q_1,q_2)\in Q$ denote $\proj_{Q_1}(q)=q_1$ and $\proj_{Q_2}(q)=q_2$. Given two sets $Q_1,Q_2$ in some space, denote $Q_1+Q_2 = \{q_1+q_2:q_1\in Q_1,q_2\in Q_2\}$ (Minkowski sum) and $Q_1-Q_2 = \{q_1-q_2:q_1\in Q_1,q_2\in Q_2\}$ (Minkowski difference). Finally, consider a set $S(x)\subseteq\real^n$ that varies with a parameter $x\in \real^m$ (equivalent to a set-valued function $S:\real^m\to2^{\real^n}$). 
	We say that $S(x)$ is \emph{linear on} $x$, if $S(x) = S'+\{Ax\}$, where $A\in\real^{n\times m}$ and $S'\subseteq\real^n$.
	
	Given a random variable $x$ and an associated probability measure $\pr$, we denote its expectation w.r.t. $\pr$ by $\expec_{\pr}[x]$ (when $\pr$ is clear from the context, it might be omitted). We use the term `path' or `sequence' interchangeably. Given a finite path $\omega = q_0,q_1\dots,q_N$, denote $\omega(i)=q_i$ and $\omega(end)=\omega(N)=q_N$. Given a function $g(\omega)$ of paths $\omega$, we denote $\expec^{q_0}[g(\omega)]\equiv\expec[g(\omega)|\omega(0)=q_0]$. Finally, $\normal(\mu,\Sigma)$ denotes the Gaussian distribution with mean $\mu$ and covariance matrix $\Sigma$.
	
	\subsection{Rewards over Paths} \label{sec:rewards}
	Consider a set $Q$ and a set of paths $\Y$ of length $N+1$, such that: $\omega(i)\in Q$, for all $\omega\in\Y$ and $0\leq i \leq N$. Assume a probability measure $\pr$ over $\borel(\Y)$ (for how to define $\borel(\Y)$ in our context, see Section \ref{sec:sampling_behaviour}). Define a \emph{reward function} $R:Q\to[0,R_\max]$. 
	We define the following expectations:
	\begin{itemize}
		\item \textit{Cumulative (discounted) reward}: $\expec_{\pr}[g_{\cum,N}(\omega)]\equiv\expec_{\pr}[\sum_{i=0}^{N}\gamma^iR(\omega(i))]$, where $\gamma\in[0,1]$.
		\item \textit{Average reward}: $\expec_{\pr}[g_{\avg,N}(\omega)]\equiv\expec_{\pr}[\tfrac{1}{N+1}\sum_{i=0}^{N}R(\omega(i))]$.
		\item \textit{Multiplicative reward}: $\expec_{\pr}[g_{\mul,N}(\omega)]\equiv\expec_{\pr}[\prod_{i=0}^{N}R(\omega(i))]$.
	\end{itemize}
	These expectations can describe a wide range of quantitative/qualitative properties of paths in $\Y$, and they have been employed for verification in numerous settings, such as (interval) Markov chains (e.g. \cite{lahijanian2015dt_imcs,coogan2020,luca_tac_2021,jackson2021strategy}), stochastic hybrid systems (e.g., \cite{abate2008probabilistic}), etc. Later, we showcase their descriptive power within our framework (see Section \ref{sec:sampling_behaviour}). 
	
	\subsection{Interval Markov Chains (IMCs)}\label{sec:prelim_imcs}
	Interval Markov Chains are Markov models with interval transition probabilities, 
	and they are defined as:
	\begin{definition}[Interval Markov Chain (IMC)]
		An IMC is a tuple $\sys_\imc=\{Q, \check{P} ,\hat{P}\}$, where: $Q$ is a finite set of states, and $\check{P},\hat{P}:Q\times Q\to[0,1]$ are functions, with $\check{P}(q,q')$ and $\hat{P}(q,q')$ representing lower and upper bounds on the probability of transitioning from state $q$ to $q'$, respectively.
	\end{definition}
	For all $q\in Q$, we have that $\check{P}(q,q')\leq \hat{P}(q,q')$ and $\sum\limits_{q'\in Q}\check{P}(q,q')\leq 1 \leq \sum\limits_{q'\in Q}\hat{P}(q,q')$. A path of an IMC is a sequence of states $\omega = q_0, q_1, q_2,\dots$, with $q_i\in Q$. Denote the set of the IMC's finite paths by $Paths^{fin}(\sys_{\imc})$. Given a state $q\in Q$, a transition probability distribution $p_{q}:Q\to[0,1]$ is called \textit{feasible} if $\check{P}(q,q')\leq p_{q}(q')\leq\hat{P}(q,q')$ for all $q'\in Q$. Given $q\in Q$, its set of feasible distributions is denoted by $\Gamma_q$. We denote by $\Gamma_Q=\{p_q:p_q\in\Gamma_q, q\in Q\}$ the set of all feasible distributions for all states.
	\begin{definition}[Adversary]
		Given an IMC $\sys_{\imc}$, an \emph{adversary} is a function $\adv:Paths^{fin}(\sys_\imc)\to\Gamma_Q$, such that $\adv(\omega)\in\Gamma_{\omega(end)}$, i.e. given a finite path it returns a feasible distribution w.r.t. the path's last element.
	\end{definition}
	The set of all adversaries is denoted by $\Pi$. Given a $\adv\in\Pi$ and $\omega(0) = q_0$, an IMC path evolves as follows: at any time-step $i>0$ $\adv$ chooses a distribution $p\in\Gamma_{\omega(i-1)}$ from which $\omega(i)$ is sampled.
	
	IMCs may be equipped with a reward function $R:Q\to[0,R_{\max}]$. Given a $\adv\in\Pi$ and an initial condition $q_0\in Q$, all expectations listed in Section \ref{sec:rewards} are well-defined and single-valued: e.g., $\expec^{q_0}_\adv[g_{\cum,N}(\omega)]$ (see \cite{givan_bmdps}). However, due to the existence of infinite adversaries, the IMC produces whole ranges of such expectations. The bounds of these ranges, e.g. ($\sup_{\adv\in\Pi}$ and) $\inf_{\adv\in\Pi}\expec^{q_0}_\adv[g_{\cum,N}(\omega)]$, can be computed via well-known \emph{value iteration} algorithms (e.g., see \cite{givan_bmdps,puterman}). 

	\section{The Sampling Behaviour of Stochastic PETC: Framework and Problem Statement}
	\subsection{Linear Stochastic PETC Systems}\label{sec:petc}
	Consider a state-feedback stochastic linear control system:
	\begin{equation*}
		d\zeta(t) = A\zeta(t)dt + BK\zeta(t)dt +B_wdW(t),
	\end{equation*}
	where: 
	$A,B,K,B_w$ are matrices of appropriate dimensions, $\zeta(t)\in\real^{n_\zeta}$ is the state, and $W(t)$ is an $n_w$-dimensional Wiener process on a complete probability space $(\Omega,\mathcal{F}, \{\algeb_t\}_{t\geq0}, \pr)$. $\Omega$ denotes the sample space, $\algeb$ the $\sigma$-algebra generated by $W$, $\{\algeb_t\}_{t\geq0}$ the natural filtration and $\pr$ the probability measure. We denote the solution of the above stochastic differential equation with initial condition $\zeta_0$ by $\zeta(t;\zeta_0)$.
	
	In PETC, the control input is held constant between consecutive \emph{sampling times} (or \emph{event times}) $t_i,t_{i+1}$ and is only updated on such times:
	\begin{equation}\label{snh}
		d\zeta(t) = A\zeta(t)dt + BK\zeta(t_i)dt +B_wdW(t), \quad t\in[t_i,t_{i+1}),
	\end{equation}
	Sampling times are determined by the \emph{triggering condition}:
	\begin{equation}\label{trig_cond}
		\begin{aligned}
			t_{i+1} = t_i + &\min\bigg\{\overline{k} h,\\&\min\Big\{kh:k\in\mathbb{N}, \phi\Big(\zeta(kh;\zeta(t_i)),\zeta(t_i)\Big)> 0\Big\}\bigg\}
		\end{aligned}
	\end{equation}
	where $t_0=0$, $h>0$ is a \textit{checking period}, $\overline{k}\in\mathbb{N}_{+}$, $\phi$ is called \textit{triggering function} and $t_{i+1}-t_i$ is called \textit{intersampling time}. PETC works as follows during an intersampling interval $[t_i,t_{i+1})$: at time $t_i$ the triggering function $\phi(\zeta(t_i),\zeta(t_i))$ is negative; the sensors check periodically, with period $h$, if the triggering function is positive; if it is found positive, or if $\overline{k} h$ time has elapsed since $t_i$, a new event $t_{i+1}$ is triggered, the latest state-measurement is sent to the controller which updates the control action, and the whole process is repeated again. The forced upper-bound $\overline{k} h$ on intersampling times prevents the system from operating open-loop indefinitely. We call the combination \eqref{snh}-\eqref{trig_cond} \textit{(stochastic) PETC system}.
	
	Intersampling time is a random variable that depends on the previously measured state and we denote it as follows:
	\begin{equation*}
		\tau(x) = \min\bigg\{\overline{k} h,\min\Big\{kh:k\in\mathbb{N}, \phi\Big(\zeta(kh;x),x\Big)> 0\Big\}\bigg\}
	\end{equation*} 
	where $x\in\real^n$ is the previously measured state. Note that, because the system is time-homogeneous, reasoning w.r.t. the interval $[t_i,t_{i+1})$ is equivalent to reasoning w.r.t. $[0, t_{i+1}-t_i)$.
	\begin{assumption}\label{assum1}
		We assume the following:
		\begin{enumerate}
			\item The matrix pair $(A,B_w)$ is controllable.\label{assum1_controllability}
			\item The checking period $h=1$.\label{assum1_period}
			\item 
			$\phi(\zeta(t;x),x) = |\zeta(t;x)-x|_\infty-\epsilon$, where $\epsilon>0$ is a predefined constant.\label{assum1_trig_fun}
		\end{enumerate}
	\end{assumption}
	Item \ref{assum1_controllability} guarantees that $\zeta(t)$ is a non-degenerate Gaussian random variable (see \cite{luca_tac_2021}). Item \ref{assum1_period} is for ease of presentation and without loss of generality. Regarding item \ref{assum1_trig_fun}, $\phi$ is the well-studied Lebesgue-sampling triggering function \cite{astrom2002comparison} with an $\infty$-norm instead of a $2$-norm. We restrict to this case for clarity, but our results are extendable to more general functions.
	\begin{remark}\label{remark_lebesgue_proof}
		Modifying the proof of {\cite[Theorem 1]{on_etc_stochastic_tac2020}}, one can show that Lebesgue-sampling guarantees mean-square practical stability for PETC system \eqref{snh}-\eqref{trig_cond}.
	\end{remark}
	
	\subsection{Sampling Behaviour and Associated Metrics}\label{sec:sampling_behaviour}
	A stochastic PETC system may exhibit different sequences of state-measurements and intersampling times $(\zeta_0,t_0),(\zeta(t_1),t_1-t_0),(\zeta(t_2),t_2-t_1),\dots$, where $t_i$ are sampling times. 
	We call \emph{sampling behaviour}, the set of all possible such sequences:
	\begin{definition}[Sampling Behaviour] \label{def:sampling behaviour}
		We call \emph{$N$-sampling behaviour} of stochastic PETC system \eqref{snh}-\eqref{trig_cond} the set:
		\begin{equation}
			\begin{aligned}
				\Y_N=\{(x_0,s_0),(x_1,s_1),(x_2,s_2)&,\dots,(x_N,s_N)|\\& x_i\in\real^{n_\zeta}, s_i\in\mathbb{N}_{[0, \overline{k}]}\}
			\end{aligned}
		\end{equation}
		where $N\in\mathbb{N}$. When $N$ is clear from the context, it is omitted.
	\end{definition}	  
	We denote $Q:=\real^{n_\zeta}\times\mathbb{N}_{[0,\overline{k}]}$. Given an initial condition $y_0 = (x_0,s_0)\in Q$, the set $\Y_N$ is associated to a probability measure $\pr^{y_0}_{\Y_N}$ (conditioned on $y_0$) which is inductively defined over $\borel(\Y_N)$ as follows\footnote{Consider $Q^{N+1}$ endowed with its product topology. Then $\borel(\Y_N)$ is the $\sigma$-algebra generated by cylinder sets of $Q^{N+1}$.}:
\begin{align}
	&\pr^{y_0}_{\Y_N}(\omega(0)\in (X_0,s_0)) =\ind_{(X_0,s_0)}(y_0) \label{eq:prob measure1}\\
	&\pr^{y_0}_{\Y_N}(\omega(i+1)\in (X_{i+1},s_{i+1})\text{ }|\text{ }\omega(i)=(x_i,s_i)) =\nonumber\\ &\pr(\zeta(s_{i+1};x_i)\in X_{i+1}, \tau(x_i)=s_{i+1})\label{eq:prob measure2}
\end{align}
where $\omega\in\Y_N$,
$s_0,s_i,s_{i+1}\in\mathbb{N}_{[0,\overline{k}]}$, $x_i\in\real^{n_\zeta}$, $X_0,X_{i+1}\subseteq\real^{n_\zeta}$ and we use $(X,s)$ to denote the set $\{(x,s):x\in X\}$. This measure is well-defined, even when the horizon $N=+\infty$, according to the Ionescu-Tulcea theorem \cite{ionescu}. 

\begin{remark}\label{rem:init_cond}
	As noted in Section \ref{sec:petc}, typically it is assumed that the first sampling time $t_0=0$, which implies that the first intersampling time $s_0=t_0-0=0$ and the initial condition is $y_0=(x_0,0)$. 
\end{remark}
\begin{remark}\label{rem:zero_tau}
	Under item \ref{assum1_trig_fun} of Assumption \ref{assum1}, and in every Zeno-free ETC scheme, $\pr(\tau(x)=0) = 0$ for any $x\in\real^{n_\zeta}$, because the triggering function is strictly negative for $k=0$. Thus, for any $i\geq 1$ and $\omega\in \Y_N$: $\pr_{\Y_N}(\proj_{\mathbb{N}_{[0,\overline{k}]}}(\omega(i))=0)=0$. Note that this is not in contrast with Remark \ref{rem:init_cond} that only reasons about initial conditions $(x_0,s_0)$ and not $(x_i,s_i)$ with $i\geq 1$.
\end{remark}

Studying PETC's sampling behaviour may be formalized by defining functions $g:\Y_N\to\real$ and computing their expectations $\expec_{\pr^{y_0}_{\Y_N}}[g(\omega)]$. Here, we focus on functions that can be described as cumulative $g_{\cum,N}$, average $g_{\avg,N}$ or multiplicative $g_{\mul,N}$ rewards (see Section \ref{sec:rewards}). By appropriately choosing the reward $R$, these classes of functions can describe many interesting properties of PETC's sampling behaviour:
\begin{itemize}
	\item \textit{Example 1:} Consider $R(x,s)=s$. Then $\expec_{\pr^{y_0}_{\Y_N}}[g_{\avg,N}(\omega)]$ is the expected average intersampling time: the larger it is, the less frequently the system is expected to sample, saving more bandwidth and energy.
	\item \textit{Example 2:} Consider $R(x,s)=\min(\alpha\tfrac{1}{|x|+\varepsilon}+\beta s,R_\max)$, with $\alpha,\beta,\varepsilon>0$, penalizing paths that overshoot far from the origin or exhibit a high sampling frequency. A bigger $\expec_{\pr^{y_0}_{\Y_N}}[g_{\cum,N}(\omega)]$ implies better performance in terms of stabilization speed and sampling frequency. Observe how incorporating state-measurements $x$ in our definition of sampling behaviour, allows to include control-performance related metrics, apart from sampling-performance metrics.
	\item \textit{Example 3:} Consider the reward:
	\begin{equation*}
		R(x,s) = \left\{\begin{aligned}
			&0,\quad \text{if }s=\overline{k}\\
			&1, \quad \text{otherwise}
		\end{aligned}\right.
	\end{equation*}
	Then, we have that:
	\begin{align*}
		\expec_{\pr^{y_0}_{\Y_N}}[g_{\mul,N}(\omega)] &= \pr^{y_0}_{\Y_N}\Big(\proj_{\mathbb{N}_{[0,\overline{k}]}}(\omega(i))\neq\overline{k}, \text{ }\forall i \Big)
	\end{align*}
	$\expec_{\pr^{y_0}_{\Y_N}}[g_{\mul,N}(\omega)]$ is the probability that there is no intersampling time $s=\overline{k}$ in the next $N$ events. 
	The smaller it is, the more probable it is that the system samples, at least once in the first $N$ triggers, with intersampling time $s=\overline{k}$, implying that a bigger maximum intersampling time could be used, allowing the system to sample even less frequently and saving more bandwidth. 
\end{itemize}

Observe that, if the initial condition $(x_0,s_0)$ is only known to obey some distribution $p_0:Q\to[0,1]$, the expected reward can be described as:
\begin{equation*}
	\expec_{\pr^{p_0}_{\Y_N}}[g_{\star,N}(\omega)] = \sum_{s_0\in\mathbb{N}_{[0,\overline{k}]}}\hspace{-2mm}\int_{\real^{n_\zeta}}\expec_{\pr^{(x_0,s_0)}_{\Y_N}}[g_{\star,N}(\omega)]p_0(x_0,s_0)dx_0
\end{equation*}
Thus, reasoning about individual initial conditions $y_0$ is sufficient and immediately extends to the general case of random initial conditions.

Overall, defining PETC's sampling behaviour $\Y_N$, associating it to its induced probability measure $\pr^{y_0}_{\Y_N}$ given in \eqref{eq:prob measure1}-\eqref{eq:prob measure2}, and studying expectations $\expec_{\pr^{y_0}_{\Y_N}}[g(\omega)]$ constitutes a formal framework for the study of PETC's sampling behaviour.  

\subsection{Problem Statement}\label{sec:prob_stat}
Unfortunately, exact computation of $\expec_{\pr^{y_0}_{\Y_N}}[g_{\star,N}(\omega)]$ is generally infeasible. Among others, how to obtain the measure $\pr^{y_0}_{\Y_N}$ over the \emph{uncountable} set of paths $\Y_N$ and then integrate over it? Hence, we aim at computing bounds over such expectations:
\begin{problem statement}
	Consider the PETC system \eqref{snh}-\eqref{trig_cond} and its sampling behaviour $\Y_N$, for some $N\in\mathbb{N}$. Let Assumption \ref{assum1} hold. Consider a reward function $R:Q\to[0,R_\max]$. For all initial conditions $y_0\in X\times\mathbb{N}_{[0,\overline{k}]}$, where $X\subset\real^{n_\zeta}$ is compact, 
	compute (non-trivial) lower and upper bounds on $\expec_{\pr^{y_0}_{\Y_N}}[g_{\star,N}(\omega)]$, where $\star\in\{\cum,\avg,\mul\}$. 
\end{problem statement}
\begin{figure*}[t!]
	\begin{subfigure}[t]{\linewidth}
		\centering
		\includegraphics[width = 0.95\linewidth]{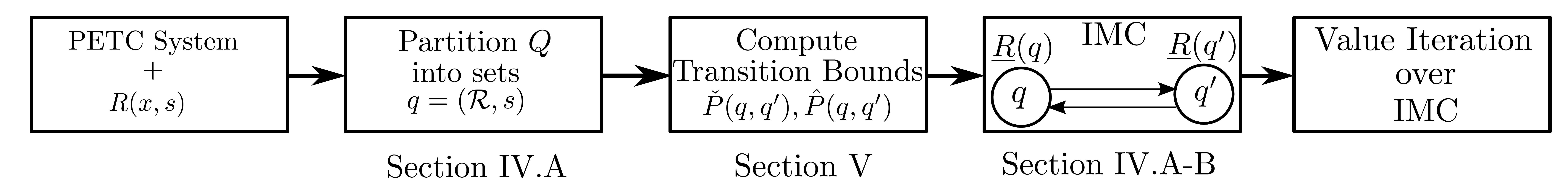}
	\end{subfigure}
	\caption{A flowchart showing the steps followed to compute bounds on the expected rewards $\expec_{\pr^{y_0}_{\Y_N}}[g_{\star,N}(\omega)]$.}
	\label{fig:flowchart}
\end{figure*}
In the rest of this article, we address the problem by constructing an IMC that \emph{abstracts} the sampling behaviour $\Y_N$ along with $\pr^{y_0}_{\Y_N}$, equipping it with suitable reward functions $\underline{R},\overline{R}$, and computing ($\sup_{\adv\in\Pi}$ and) $\inf_{\adv\in\Pi}\expec^{q_0}_\adv[g_{\star,N}(\tilde{\omega})]$, with $\star\in\{\cum,\avg,\mul\}$, to obtain the bounds we are looking for. Specifically, in the next section we show how to construct such an IMC, by partitioning the state space and providing conditions (eq. \eqref{eq:probability_bounds}-\eqref{eq:unsafe_state_probabilities}) that have to be satisfied by the IMC's transition probability intervals. We prove in Theorem \ref{main_theorem} that this IMC equipped with suitable rewards gives rise to bounds on $\expec_{\pr^{y_0}_{\Y_N}}[g_{\star,N}(\omega)]$. Later, in Section \ref{sec:transitions}, we show how to compute $\check{P}$ and $\hat{P}$ such that they satisfy \eqref{eq:probability_bounds}-\eqref{eq:unsafe_state_probabilities}, by solving optimization problems with log-concave objective functions. Finally, the desired bounds ($\sup_{\adv\in\Pi}$ and) $\inf_{\adv\in\Pi}\expec^{q_0}_\adv[g_{\star,N}(\tilde{\omega})]$ are obtained via well-known value iteration algorithms, as demonstrated through a numerical example in Section \ref{sec:examples}. A flowchart of the steps followed to compute the desired bounds is shown in Figure \ref{fig:flowchart}.

\begin{remark} 
	By assuming that $y_0 = (x_0,s_0)\in X\times\mathbb{N}_{[0,\overline{k}]}$, we essentially assume that the initial state of the system $x_0\in X$. Compactness of $X$ is vital, to partition it into a finite number of subsets $\reg_i$ and end up with a finite-state IMC. Nonetheless, this is not an unrealistic assumption, as in practice the initial conditions of the system are usually known to be bounded in some set. Furthermore, $s_0\in\mathbb{N}_{[0,\overline{k}]}$ for generality, but, as mentioned in Remark \ref{rem:init_cond}, typically in ETC $s_0=0$.
\end{remark}
\begin{remark} \label{rem:pctl}
	We constrain ourselves to $\{\cum,\avg,\mul\}$ rewards for clarity, but our approach extends to a more general framework. As commented in Section \ref{sec:underline_R}, our IMCs can be employed for computing bounds on \emph{bounded-until} probabilities:
	\begin{equation*}
		\pr^{y_0}_{\Y_N}(\exists i\in\mathbb{N}_{[0,N]} \text{ s.t. }\omega(i)\in G \text{ and }\forall k\leq i, \text{ }\omega(k)\in S)
	\end{equation*} 
	where $S,G\subseteq Q$. 
	Bounded-until constitutes the backbone of PCTL \cite{lahijanian2015dt_imcs}, as all PCTL formulas can be written with bounded-until operations. In short, our approach directly extends to PCTL. Moreover, by extending our proofs according to \cite{coogan2020} and \cite{jackson2021strategy}, we could incorporate probabilistic $\omega-$regular or LTL (Linear Temporal Logic) properties. 
\end{remark} 

\section{IMCs Abstracting PETC's Sampling Behaviour}

\subsection{Constructing the IMC} \label{sec:imc}
Typically, to abstract a stochastic behaviour $\Y$ and its probability measure $\pr_{\Y}$ through an IMC: i) the state space is partitioned into a finite number of regions, each of which corresponds to an IMC-state, ii) if the state space is unbounded, then one of these regions is unbounded, and its IMC-state is made absorbing\footnote{An IMC-state $q\in Q_{\imc}$ is absorbing $\iff$ $\check{P}(q,q)=1$.}, and iii) the bounds on transition probabilities $\check{P}(q,q'),\hat{P}(q,q')$ are derived such that $\check{P}(q,q')\leq\pr_{\Y}(\omega(i+1)\in q' | \omega(i) = x)\leq\hat{P}(q,q')$ for all $x\in q$, where $\omega \in \Y$.

In this work, we adopt the above methodology. Observe that the state space from which the sampling behaviour emerges is the set $Q = \real^{n_\zeta} \times \mathbb{N}_{[0,\overline{k}]}$. Since $\mathbb{N}_{[0,\overline{k}]}$ is by-construction partitioned into the singletons $\{0\},\{1\},\dots,\{\overline{k}\}$, it suffices to partition $\real^{n_\zeta}$. Consider 
$m$ non-overlapping 
compact regions $\reg_i$ such that $\bigcup_{i\in\mathbb{N}_{[1,m]}}\reg_i = X$. Then, $\real^{n_\zeta}$ is partitioned into:
\begin{equation*}
	Q_\reg\cup\{\overline{X}\}
\end{equation*}
where $Q_\reg = \{\reg_1,\dots,\reg_m\}$. According to the aforementioned methodology, the states of the IMC would be of the form $(q,s)\in\Big(Q_\reg\cup\{\overline{X}\}\Big)\times\mathbb{N}_{[0,\overline{k}]}$. Nonetheless, for compactness of the IMC, we group all states $(\overline{X},s)$ (for $s\in\mathbb{N}_{[0,\overline{k}]}$) that correspond to $\overline{X}$ into a single absorbing state $\uns$:
\begin{equation*}
	\uns = \overline{X}\times\mathbb{N}_{[0,\overline{k}]}
\end{equation*} 
From now on, we abusively use $(\reg_i,s)$ (resp. $\uns$) to denote both the corresponding IMC-state and the set $\reg_i\times\{s\}$ (resp. $\overline{X}\times\mathbb{N}_{[0,\overline{k}]}$). Finally, the set of the IMC-states is:
\begin{equation}\label{eq:qimc}
	Q_{\imc} = \Big(Q_\reg \times \mathbb{N}_{[0,\overline{k}]}\Big)\cup\{\uns\}
\end{equation}

Regarding the transition probability bounds $\check{P}(q,q')$ and $\hat{P}(q,q')$, since we need to bound $\pr(\omega(i+1)\in q' | \omega(i) = x)$ for all $x\in q$, by employing \eqref{eq:prob measure2}, we have that for all $(\reg,k),(\sreg,s)\in Q_\reg \times \mathbb{N}_{[0,\overline{k}]}$:
\begin{equation}\label{eq:probability_bounds}
	\begin{aligned}
		&\check{P}\Big((\reg,k),(\sreg,s)\Big)\leq \min\limits_{x\in\reg}\pr(\zeta(s;x)\in \sreg,\tau(x)=s)\\
		&\hat{P}\Big((\reg,k),(\sreg,s)\Big)\geq \max\limits_{x\in\reg}\pr(\zeta(s;x)\in \sreg,\tau(x)=s)\\
		&\check{P}\Big((\reg,k),\uns\Big)\leq\sum\limits_{s\in\mathbb{N}_{[0,\overline{k}]}}\hspace{-3mm}\min\limits_{x\in\reg}\pr(\zeta(s;x)\in \overline{X},\tau(x)=s)\\
		&\hat{P}\Big((\reg,k),\uns\Big)\geq\sum\limits_{s\in\mathbb{N}_{[0,\overline{k}]}}\hspace{-3mm}\max\limits_{x\in\reg}\pr(\zeta(s;x)\in \overline{X},\tau(x)=s)
	\end{aligned}
\end{equation}
and for all $q'\in Q_\imc$:
\begin{equation}\label{eq:unsafe_state_probabilities}
	\check{P}(\uns,q')=\hat{P}(\uns,q') = \left\{\begin{aligned}
		&1,\text{ if }q'=\uns\\ &0,\text{ otherwise}
	\end{aligned}\right.
\end{equation}
The computation of $\check{P}$ and $\hat{P}$ such that they satisfy \eqref{eq:probability_bounds}-\eqref{eq:unsafe_state_probabilities} is addressed in Section \ref{sec:transitions} and it involves bounding the solutions to the optimization problems of \eqref{eq:probability_bounds}.
The summation in the last two inequalities of \eqref{eq:probability_bounds} results from the fact that $\uns$ is a grouping of all states $(\overline{X},s)$ with $s\in\mathbb{N}_{[0,\overline{k}]}$, while \eqref{eq:unsafe_state_probabilities} indicates that $\uns$ is indeed absorbing. In view of Remark \ref{rem:zero_tau}, since we know that $\pr(\tau(x)=0) = 0$, then for any $q\in Q_\imc$ and $\sreg\in Q_\reg$, it suffices to write $\check{P}(q,(\sreg,0))=\hat{P}(q,(\sreg,0))=0$; that is, states $(\sreg,0)$ only have outgoing transitions and no incoming ones. Finally, we define the IMC that abstracts the sampling behaviour as follows:
\begin{equation}\label{eq:our_imc}
	\sys_\imc =(Q_\imc,\check{P},\hat{P}),
\end{equation}
where $Q_\imc$ is given by \eqref{eq:qimc} and $\check{P},\hat{P}$ are given by \eqref{eq:probability_bounds}-\eqref{eq:unsafe_state_probabilities}. 

To demonstrate how the constructed IMC abstracts the PETC system's sampling behaviour, let us relate paths $\omega\in\Y_N$ to paths $\tilde{\omega}\in Paths^{fin}(\sys_{\imc})$. First, consider a path $\omega$ such that $\omega(i)\not\in\uns$ for all $i\leq N$. Then, this path is related to a path $\tilde{\omega}\in Paths^{fin}(\sys_{\imc})$ of the same length, for which $\omega(i)\in\tilde{\omega}(i)$ for all $i\leq N$. Next, consider a path such that $\omega(i)\in\uns$ for some $i\leq N$ and $\omega(j)\not\in\uns$ for all $j<i$. Then, $\omega$ is related to $\tilde{\omega}\in Paths^{fin}(\sys_{\imc})$ of the same length, for which $\omega(j)\in\tilde{\omega}(j)$ for all $j\leq i$ and $\tilde{\omega}(k)=\uns$ for all $k\geq i$. This latter relation indicates that all paths in $\Y_N$ that enter $\overline{X}$ (even those that eventually return to $X$) are mapped to IMC-paths that enter $\uns$ at the same time and stay there.

\subsection{Bounds on Sampling-Behaviour Rewards via IMCs}\label{sec:underline_R}
The IMC described above, if equipped with suitable rewards $\underline{R},\overline{R}$, can be employed for the computation of lower and upper bounds on $\expec_{\pr^{y_0}_{\Y_N}}[g_{\star,N}(\omega)]$:
\begin{theorem}\label{main_theorem}
	Consider the IMC $\sys_\imc$ given by \eqref{eq:our_imc}.
	Define reward functions $\underline{R},\overline{R}:Q_\imc\to[0,R_\max]$ such that:
	\begin{equation}\label{eq:underline_R}
		\begin{aligned}
			&\underline{R}(q) = \left\{\begin{aligned}
				&\min\limits_{(x,s)\in q}R(x,s), \text{ if }q\neq\uns\\&\min\limits_{(x,s)\in \real^{n_\zeta}\times\mathbb{N}_{[1,\overline{k}]}}\hspace{-7mm}R(x,s), \text{ if }q=\uns
			\end{aligned}\right.\\
			&\overline{R}(q) = \left\{\begin{aligned}
				&\max\limits_{(x,s)\in q}R(x,s), \text{ if }q\neq\uns\\&\max\limits_{(x,s)\in \real^{n_\zeta}\times\mathbb{N}_{[1,\overline{k}]}}\hspace{-7mm}R(x,s), \text{ if }q=\uns
			\end{aligned}\right.
		\end{aligned}
	\end{equation}
	and the associated 
	rewards over paths $\tilde{\omega}\in Paths^{fin}(\sys_\imc)$ denoted by $\underline{g}_{\star,N},\overline{g}_{\star,N}$, where $\star\in\{\cum,\avg,\mul\}$.
	Then, for any initial condition $y_0=(x_0,s_0)\in X\times\mathbb{N}_{[0,\overline{k}]}$ and $N\in\mathbb{N}$:
	\begin{equation*}
		\inf_{\adv\in\Pi}\hspace{-.5mm}\expec^{q_0}_\adv[\underline{g}_{\star,N}(\tilde{\omega})]\leq\expec_{\pr^{y_0}_{\Y_N}}[g_{\star,N}(\omega)]\leq\sup_{\adv\in\Pi}\hspace{-.5mm}\expec^{q_0}_\adv[\overline{g}_{\star,N}(\tilde{\omega})]
	\end{equation*}
	where $q_0$ is such that $y_0 \in q_0$.
\end{theorem}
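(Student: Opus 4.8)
The plan is to exhibit a single path-dependent adversary that makes the IMC reproduce \emph{exactly} the region-level statistics of the true sampling behaviour, and then to use the pointwise domination of $R$ by $\underline R,\overline R$. First I would introduce the projection map $\Phi:\Y_N\to Paths^{fin}(\sys_{\imc})$ sending a continuous sample path $\omega$ to the IMC-path $\tilde\omega$ recording which IMC-state contains each $\omega(i)$, with the convention (already described after \eqref{eq:our_imc}) that once $\omega$ enters $\overline X$ the image is sent to $\uns$ and kept there. This induces a pushforward law $\Phi_\ast\pr^{y_0}_{\Y_N}$ on IMC-paths, and the whole statement reduces to: (i) showing $\Phi_\ast\pr^{y_0}_{\Y_N}$ is the law induced by some feasible adversary $\adv^\ast\in\Pi$, and (ii) a monotonicity argument bounding $g_{\star,N}$ by $\overline g_{\star,N}\circ\Phi$ from above and $\underline g_{\star,N}\circ\Phi$ from below.

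For step (i), since $Q_\imc$ is finite, so is the set of length-$(N{+}1)$ paths, and conditioning is elementary. For a finite IMC-path $\tilde\omega$ of length $i{+}1$ ending in $(\reg,k)\neq\uns$ and carrying positive probability, let $\mu_{\tilde\omega}$ be the conditional law of the continuous coordinate $\proj_{\real^{n_\zeta}}(\omega(i))$ given the discrete event $\{\omega(j)\in\tilde\omega(j),\,j\le i\}$. Using \eqref{eq:prob measure2} and the tower property, the pushforward's one-step kernel is
\begin{equation*}
	K(\tilde\omega)\big((\sreg,s)\big)=\int_{\reg}\pr\big(\zeta(s;x)\in\sreg,\ \tau(x)=s\big)\,d\mu_{\tilde\omega}(x),
\end{equation*}
with the transition into $\uns$ obtained by summing this over $s$ and replacing $\sreg$ by $\overline X$. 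Because for each fixed $x\in\reg$ the integrand lies in $[\check P,\hat P]$ by the construction \eqref{eq:probability_bounds}, and $K(\tilde\omega)$ is a $\mu_{\tilde\omega}$-average of such values, each $K(\tilde\omega)$ is a \emph{feasible} distribution, i.e. $K(\tilde\omega)\in\Gamma_{\tilde\omega(end)}$. Setting $\adv^\ast(\tilde\omega):=K(\tilde\omega)$ (chosen arbitrarily but feasibly on $\pr$-null region-paths, and as the forced self-loop on $\uns$ by \eqref{eq:unsafe_state_probabilities}) yields an adversary whose induced measure shares the point-mass initial condition $q_0$ and the one-step kernels of $\Phi_\ast\pr^{y_0}_{\Y_N}$; by the inductive (Ionescu--Tulcea) construction the two measures coincide, so $\expec^{q_0}_{\adv^\ast}[f(\tilde\omega)]=\expec_{\pr^{y_0}_{\Y_N}}[f(\Phi(\omega))]$ for every bounded $f$.

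For step (ii), \eqref{eq:underline_R} gives $\underline R(q)\le R(x,s)\le\overline R(q)$ for all $(x,s)\in q$ when $q\neq\uns$, and $\underline R(\uns)\le R(x,s)\le\overline R(\uns)$ for \emph{all} $(x,s)\in\real^{n_\zeta}\times\mathbb N_{[1,\overline k]}$ — exactly what is needed, since a continuous path mapped into $\uns$ may still visit arbitrary states, even returning to $X$. Together with $\omega(i)\in\Phi(\omega)(i)$ for every $i$ (the two cases after \eqref{eq:our_imc}), this gives $\underline R(\Phi(\omega)(i))\le R(\omega(i))\le\overline R(\Phi(\omega)(i))$ pointwise along every path. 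As $R\ge 0$, each aggregation $\sum_i\gamma^iR$, $\tfrac1{N+1}\sum_iR$, $\prod_iR$ is nondecreasing in each coordinate, so $\underline g_{\star,N}(\Phi(\omega))\le g_{\star,N}(\omega)\le\overline g_{\star,N}(\Phi(\omega))$. Taking expectations and applying step (i) with $f=\underline g_{\star,N}$ and $f=\overline g_{\star,N}$ gives
\begin{equation*}
	\expec^{q_0}_{\adv^\ast}[\underline g_{\star,N}(\tilde\omega)]\le\expec_{\pr^{y_0}_{\Y_N}}[g_{\star,N}(\omega)]\le\expec^{q_0}_{\adv^\ast}[\overline g_{\star,N}(\tilde\omega)],
\end{equation*}
and relaxing to $\inf_{\adv\in\Pi}$ on the left and $\sup_{\adv\in\Pi}$ on the right (legitimate since $\adv^\ast\in\Pi$) yields the claim; note the same $\adv^\ast$ serves both bounds.

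The main obstacle is step (i): the projected process $\Phi(\omega)$ is \emph{not} Markov, because the one-step region-to-region probability genuinely depends on the within-region location of $\omega(i)$, whose law depends on the entire history. Hence no fixed transition matrix reproduces the true statistics, and the adversary must absorb this history dependence — which is precisely why time-varying, path-dependent adversaries are indispensable here, whereas the time-invariant choice of \cite[Theorem IV.1]{delimpaltadakis2021abstracting} suffices only in the infinite-horizon/stationary regime. The residual work is the measure-theoretic bookkeeping: producing the conditional laws $\mu_{\tilde\omega}$ and verifying that equality of initial distributions and one-step kernels forces equality of the full path laws. By contrast the feasibility check is routine, since $K(\tilde\omega)$ is a convex average of pointwise probabilities, each trapped in $[\check P,\hat P]$ by the very design of \eqref{eq:probability_bounds}.
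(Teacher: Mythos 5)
Your proposal is correct, but it takes a genuinely different route from the paper's own proof. The paper works through dynamic programming: it expresses both the IMC expectation and the true expectation as value functions defined by backward value iteration (Lemma \ref{lemma:vi}), constructs a \emph{Markovian} time-varying adversary whose transition out of each abstract state $q$ is the true kernel frozen at the minimizing point $y^*_i(q)=\argmin_{y\in q}V_i(y)$, and proves by backward induction that the IMC value stays below $\min_{y\in q}V_i(y)$ at every step, with Lemmas \ref{lemma_uns} and \ref{lemma:uns-int} handling the absorbing state $\uns$. You instead \emph{realize} the true region-level process as an adversary: you push $\pr^{y_0}_{\Y_N}$ forward through the region-projection map, observe via \eqref{eq:probability_bounds} that each history-conditioned one-step kernel is a convex average of pointwise transition probabilities and hence feasible, and finish by pointwise reward domination plus monotonicity of the $\{\cum,\avg,\mul\}$ aggregations. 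Your route is arguably more conceptual: it formalizes exactly the folklore argument the paper says was ``taken for granted'' in the IMC-abstraction literature, it produces both bounds from a single adversary, and it extends verbatim to any bounded path functional that is monotone under pointwise reward domination. What the paper's route buys in exchange is that it never needs conditional laws or path-dependent adversaries — it exhibits a Markovian adversary, the same class over which the value-iteration algorithms actually optimize, so the proof connects directly to what is computed in practice and to the bounded-until/PCTL extension. Two small points to tighten in your write-up: first, the inequality $\underline{R}(\tilde{\omega}(i))\leq R(\omega(i))$ at times when the image path sits in $\uns$ holds only \emph{almost surely}, not pointwise along every path, because $\underline{R}(\uns)$ in \eqref{eq:underline_R} minimizes over $s\geq 1$ only and the event that some intersampling time equals $0$ at a step $i\geq 1$ is null (Remark \ref{rem:zero_tau}) — this is harmless for expectations and is the same issue the paper disposes of by crossing out the $\overline{X}\times\{0\}$ term in Lemma \ref{lemma:uns-int}; second, your closing claim that path-dependent adversaries are ``indispensable'' is true only for your construction — the paper's proof shows a Markovian time-varying adversary suffices, precisely because one need not reproduce the true statistics, only under- or over-estimate the value.
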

\begin{proof}[Proof Sketch]
	The above expectations are written as \emph{value functions} defined via \emph{value iteration} (see Lemma \ref{lemma:vi}), and mathematical induction over the iteration is employed. For the full proof, see Appendix \ref{app:main_theorem}.
\end{proof}
Hence, to compute bounds on expectations $\expec_{\pr^{y_0}_{\Y_N}}[g_{\star,N}(\omega)]$, we equip the IMC \eqref{eq:our_imc} with the reward functions $\underline{R},\overline{R}$ from \eqref{eq:underline_R} and compute the expectations $\inf_{\adv\in\Pi}\hspace{-.5mm}\expec^{q_0}_\adv[\underline{g}_{\star,N}(\tilde{\omega})]$ and $\sup_{\adv\in\Pi}\hspace{-.5mm}\expec^{q_0}_\adv[\overline{g}_{\star,N}(\tilde{\omega})]$. As mentioned in Section \ref{sec:prelim_imcs}, these expectations can be computed via value-iteration algorithms (e.g. see \cite{givan_bmdps}), with polynomial complexity in the number of IMC-states. In fact, the value iteration used for $\{\cum,\mul\}$ rewards is given here by equations \eqref{eq:viimc} and \eqref{eq:viimc_mul} respectively in the Appendix (the $\avg$ reward is the same as $\cum$ with $\gamma=1$, and in the last step we just divide by $N+1$). 
Moreover, since bounded-until probabilities on IMCs, and thus PCTL properties, may be computed through a similar value iteration \cite{lahijanian2015dt_imcs}, our proofs can be adapted to show that we can bound bounded-until probabilities defined over $\Y_N$ by using the constructed IMC. 

Finally, Theorem \ref{main_theorem} indicates that \emph{the same IMC} can be used to derive bounds for any chosen $\{\cum,\avg,\mul\}$ reward, for any horizon $N$, and any initial condition $y_0\in X \times \mathbb{N}_{[0,\overline{k}]}$. It is also worth noting that a proof like that of Theorem \ref{main_theorem} was missing from the literature on IMC-abstractions \cite{lahijanian2015dt_imcs,coogan2020,luca_tac_2021,jackson2021strategy}, where it was (correctly) taken for granted that the quantitative metric (e.g., a reward) evaluated over the IMC bounds the metric evaluated over the original stochastic behaviour, due to the way that the transition probabilities are constructed.
\begin{remark}\label{rem:partition_Y}
	For any $q\in Q_{\imc}$, the rewards $\underline{R}$ and $\overline{R}$ serve as conservative estimates of the real reward obtained if the system operates in $q$. In fact, specifically for $\uns$, $\underline{R}$ and $\overline{R}$ are \emph{global} lower and upper bounds, respectively, on the actual reward $R(x,s)$ (except for the case $s=0$, which happens with zero probability, except for initial conditions). 
	Due to this, for states $(\reg,s)\in Q_\imc$ with $\reg$ being ``near" $\overline{X}$ (i.e., near the boundary of $X$), which tend to obtain larger transition probabilities to $\uns$, the lower and upper bounds $\inf_{\adv\in\Pi}\hspace{-.5mm}\expec^{q_0}_\adv[\underline{g}_{\star,N}(\tilde{\omega})]$ and $\sup_{\adv\in\Pi}\hspace{-.5mm}\expec^{q_0}_\adv[\overline{g}_{\star,N}(\tilde{\omega})]$ are more conservative, compared to when $\reg$ is further inside $X$. This is showcased by Figure \ref{fig:surf_not_sample}. For that reason, in practice, to construct the IMC, it is better to partition a superset $Y\supseteq X$ into regions $\reg_i$, so that the regions that comprise $X$ are further inside $Y$, and the corresponding bounds are not that conservative.
\end{remark}
\begin{remark}
	Our results extend to infinite horizons (i.e. $N=+\infty$), when the rewards are well-defined, as it has already been proven in \cite[Theorem IV.1]{delimpaltadakis2021abstracting}; in fact, the proof for $N=+\infty$ is simpler, as it suffices to consider time-invariant adversaries.
\end{remark}
The only thing that remains is to describe how to compute the transition probability bounds given by \eqref{eq:probability_bounds}. This is carried out in the coming section.
\section{Computing the Transition Probability Bounds} \label{sec:transitions}
Here, we compute lower bounds on the minima and upper bounds on the maxima in \eqref{eq:probability_bounds}, thus completing the IMC's construction. Through a series of convex relaxations, and employing Proposition \ref{prop:normal} and Lemma \ref{lemma:log-concave}, the min/max expressions in \eqref{eq:probability_bounds} are formulated as optimization problems of log-concave functions (in fact, Gaussian integrals) over hyperrectangles, which are straightforward to solve. To facilitate this analysis, we introduce the following assumption:
\begin{assumption}
	The set $X$ and all sets $\reg_i\in Q_\reg$ are hyperrectangles.
\end{assumption}
This assumption is without loss of generality, as in the case where $X$ is not a hyperrectangle, our approach could be applied by under/overapproximating $X$ by a hyperrectangle $Y\supseteq X$ and partitioning $Y$ into a finite set of hyperrectangles $\reg_i$.

For the rest of the document, for any $s \in \mathbb{N}_{[1,N]}$, we denote $\zeta_{s,x}=\zeta(s;x)$ and $\tilde{\zeta}_{s,x} = [
	\zeta^\top_{1,x} \text{ }\zeta^\top_{2,x} \text{ }\dots \allowbreak\zeta^\top_{s,x}
]^\top$. The following statements are instrumental in our derivations:
\begin{proposition}\label{prop:normal}
	For any $s \in \mathbb{N}_{[1,N]}$, we have that $\tilde{\zeta}_{s,x}\sim\normal(\mu_{\tilde{\zeta}_{s,x}},\Sigma_{\tilde{\zeta}_{s,x}})$ with:
	\begin{align*}
		&\mu_{\tilde{\zeta}_{s,x}}=\begin{bmatrix}
			\expec(\zeta^\top_{1,x})&\expec(\zeta^\top_{2,x})&\dots&\expec(\zeta^\top_{s,x})
		\end{bmatrix}^\top\\
		&\Sigma_{\tilde{\zeta}_{s,x}}=\begin{bmatrix}
			\cov(1,1) &\cov(1,2)&\dots&\cov(1,s)\\
			\vdots &\vdots &\dots &\vdots\\
			\cov(s,1) &\cov(s,2) &\dots &\cov(s,s)
		\end{bmatrix}
	\end{align*}
	where $\expec(\zeta(t;x)) = [e^{At}(I+A^{-1}BK)-A^{-1}BK]x$ and:
	\begin{align*}
		&\cov(t_1,t_2) = \int_{0}^{\min(t_1,t_2)}e^{A(t_1-s)}B_wB_w^\top e^{A^\top(t_2-s)}ds
	\end{align*}
	Thus, given some set $S \subseteq \real^{sn_\zeta}$, the following holds:
	\begin{equation}
		\pr(\tzeta(s;x)\in S) = \int_S \normal(dz|\mu_{\tilde{\zeta}_{s,x}},\Sigma_{\tilde{\zeta}_{s,x}})
	\end{equation}
\end{proposition}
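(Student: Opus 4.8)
The plan is to exploit that, \emph{within a single intersampling interval}, the PETC dynamics \eqref{snh} reduce to a linear SDE with constant drift $BKx$ and additive noise, so that the entire trajectory $t\mapsto\zeta(t;x)$ is a Gaussian process and $\tzeta_{s,x}$ is merely a finite-dimensional marginal of it. The crucial structural observation is that, under Assumption \ref{assum1} (checking period $h=1$), the triggering mechanism only queries $\zeta(k;x)$ at integer $k$ while the control input stays frozen at $BKx$ throughout the interval; hence the evaluations at $t=1,2,\dots,s$ all come from the \emph{same} SDE with the \emph{same} constant drift, which is exactly what makes their joint law Gaussian rather than merely each marginal Gaussian.

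First I would write the variation-of-constants solution
\[
\zeta(t;x)=e^{At}x+\Big(\int_0^t e^{A(t-\sigma)}d\sigma\Big)BKx+\int_0^t e^{A(t-\sigma)}B_w\,dW(\sigma),
\]
and verify by Itô's formula that it solves \eqref{snh}. The first two terms are deterministic given $x$, while the Itô integral has zero mean; evaluating $\int_0^t e^{A(t-\sigma)}d\sigma=A^{-1}(e^{At}-I)$ and using that $A$ commutes with $e^{At}$ recovers the claimed mean $\expec(\zeta(t;x))=[e^{At}(I+A^{-1}BK)-A^{-1}BK]x$.

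Next I would establish \emph{joint} Gaussianity of $\tzeta_{s,x}=[\zeta_{1,x}^\top\,\cdots\,\zeta_{s,x}^\top]^\top$ via the Cramér--Wold device: for any fixed $a_1,\dots,a_s\in\real^{n_\zeta}$, the scalar combination is a deterministic constant $c$ plus a single Itô integral of a deterministic integrand against $W$,
\[
\sum_{j=1}^s a_j^\top\zeta_{j,x}=c+\int_0^s\Big(\sum_{j=1}^s \ind_{[0,j]}(\sigma)\,a_j^\top e^{A(j-\sigma)}\Big)B_w\,dW(\sigma),
\]
which is a univariate Gaussian. Since every linear combination is Gaussian, $\tzeta_{s,x}$ is jointly Gaussian. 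The covariance blocks then follow from the Itô isometry applied to the noise terms only (the deterministic parts contribute nothing to covariances): for the $(t_1,t_2)$ block,
\[
\cov(t_1,t_2)=\expec\Big[\Big(\int_0^{t_1}e^{A(t_1-\sigma)}B_w\,dW\Big)\Big(\int_0^{t_2}e^{A(t_2-\sigma)}B_w\,dW\Big)^\top\Big]=\int_0^{\min(t_1,t_2)}e^{A(t_1-\sigma)}B_wB_w^\top e^{A^\top(t_2-\sigma)}\,d\sigma,
\]
matching the stated $\Sigma_{\tzeta_{s,x}}$. Having identified the law of $\tzeta_{s,x}$ as $\normal(\mu_{\tzeta_{s,x}},\Sigma_{\tzeta_{s,x}})$, the displayed probability identity $\pr(\tzeta_{s,x}\in S)=\int_S\normal(dz\mid\mu_{\tzeta_{s,x}},\Sigma_{\tzeta_{s,x}})$ is immediate as the definition of the pushforward measure.

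I expect the main obstacle to be the bookkeeping of the cross-covariance: getting the $\min(t_1,t_2)$ upper limit correct when the two Itô integrals have different terminal times, and correctly assembling the integrand in the Cramér--Wold step (the indicator cutoffs $\ind_{[0,j]}(\sigma)$ that encode the differing integration limits). Non-degeneracy of $\Sigma_{\tzeta_{s,x}}$ is not part of the stated identity, but it is what guarantees the Gaussian admits a density and is needed downstream; I would simply invoke controllability of $(A,B_w)$ from Assumption \ref{assum1}, item \ref{assum1_controllability} (as already noted in the text, citing \cite{luca_tac_2021}) rather than reproving it here.
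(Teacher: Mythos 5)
Your proposal is correct and follows essentially the same route as the paper: the paper's proof is a one-line appeal to the explicit solution of the linear SDE with expectation and covariance operators applied (citing Mao, pp.~96), and your argument is exactly this computation written out in full, with the Cram\'er--Wold step supplying the joint Gaussianity that the paper outsources to the textbook. The variation-of-constants formula, the $A^{-1}(e^{At}-I)BKx$ evaluation of the mean, and the It\^o-isometry computation of the $\min(t_1,t_2)$ cross-covariance all match the stated quantities, so nothing is missing.
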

\begin{proof}
	Application of the expectation and covariance operators to the solution of linear SDE \eqref{snh} (see {\cite[pp. 96]{mao_book}}).
\end{proof}
\begin{lemma}\label{lemma:log-concave}
	Consider a function $h:\real^{n}\to[0,1]$ with $n\in\mathbb{N}_+$ defined by:
	\begin{equation*}
		h(x) = \int_{S(x)}\normal(dz|f(x),\Sigma)
	\end{equation*}
	where $\Sigma$ is a covariance matrix, $S(x)\subseteq \real^m$ with $m\in\mathbb{N}_+$ is linear on $x$ and convex for all $x\in\real^n$, and $f:\real^n\to\real^m$ is an affine function. The function $h(x)$ is log-concave on $x$.
\end{lemma}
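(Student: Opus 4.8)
The plan is to reduce the statement to an application of Prékopa's theorem, which guarantees that marginalizing a jointly log-concave function over some of its variables yields a log-concave function of the remaining ones. The obstacle to invoking it directly is that the integration domain $S(x)$ moves with $x$; the first task is therefore to absorb this $x$-dependence into the integrand so that the domain becomes fixed.

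First I would exploit the hypothesis that $S(x)$ is \emph{linear on} $x$: by definition $S(x) = S' + \{Ax\}$ for some matrix $A$ and some fixed set $S'\subseteq\real^m$. Since $z\in S(x)\iff z-Ax\in S'$, the translation $w = z-Ax$ (Jacobian $1$) gives
\[
	h(x) = \int_{S'}\normal(w+Ax\mid f(x),\Sigma)\,dw = \int_{\real^m}\ind_{S'}(w)\,\normal(w+Ax\mid f(x),\Sigma)\,dw,
\]
where in the last step the integral over the fixed set $S'$ is rewritten as an integral over all of $\real^m$ against the indicator $\ind_{S'}$. Now the integration domain no longer depends on $x$, and it remains to show that the integrand is jointly log-concave in $(w,x)\in\real^m\times\real^n$.

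Next I would verify joint log-concavity of the two factors. Because $S(x)$ is convex for all $x$ and translation preserves convexity, $S'$ is convex; hence $\ind_{S'}$ is log-concave (its logarithm is $0$ on the convex set $S'$ and $-\infty$ off it, i.e.\ concave), and, viewed as a function of $(w,x)$ that ignores $x$, it is jointly log-concave. For the Gaussian factor, its logarithm equals $-\tfrac12(w+Ax-f(x))^\top\Sigma^{-1}(w+Ax-f(x))$ up to an additive constant; since $f$ is affine the argument $w+Ax-f(x)$ is affine in $(w,x)$, and since $\Sigma$ is a positive-definite covariance matrix $\Sigma^{-1}$ is positive definite, so this is a concave quadratic form in $(w,x)$ and the Gaussian factor is jointly log-concave. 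The product of two jointly log-concave functions is jointly log-concave, so the full integrand is log-concave on $\real^m\times\real^n$.

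Finally, Prékopa's theorem applied to this integrand, integrating out $w$ over $\real^m$, yields that $h(x)$ is log-concave on $\real^n$, as claimed. I expect the only genuinely delicate point to be the first step: correctly using the ``linear on $x$'' structure to render the domain $x$-independent via the substitution $z = w+Ax$; once the domain is fixed, the log-concavity of the integrand and the invocation of Prékopa's theorem are routine. A minor bookkeeping point is that $\Sigma$, being the covariance of the non-degenerate Gaussian $\tzeta_{s,x}$ guaranteed by controllability (Proposition \ref{prop:normal} together with Assumption \ref{assum1}), is positive definite, so that $\Sigma^{-1}$ exists and the density $\normal(\,\cdot\mid f(x),\Sigma)$ is well defined throughout.
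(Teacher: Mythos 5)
Your proof is correct, but it takes a genuinely different route from the paper's in its core step. Both arguments first use the ``linear on $x$'' structure $S(x)=S'+\{Ax\}$ to eliminate the moving domain: you do it by the change of variables $w=z-Ax$, turning $h$ into an integral of $\ind_{S'}(w)\,\normal(w+Ax\mid f(x),\Sigma)$ over all of $\real^m$, whereas the paper shifts the mean instead, writing $h(x)=g(f(x)-Gx)$ with $g(x)=\int_{S'}\normal(dz\mid x,\Sigma)$ and finishing by composition of a log-concave function with an affine map. The real divergence is in how log-concavity is then established: you verify joint log-concavity of the integrand in $(w,x)$ (indicator of a convex set times a Gaussian with affine argument) and invoke the marginalization form of Pr\'ekopa's theorem, that integrating a jointly log-concave function over some of its variables preserves log-concavity. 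The paper instead uses the set-function form: it writes $g(x)=P(S'-\{x\})$ for the Gaussian measure $P$, invokes Pr\'ekopa's result that $P$ is a log-concave \emph{measure}, and then proves by hand the Minkowski-sum identity $\lambda(S'-\{x_1\})+(1-\lambda)(S'-\{x_2\})=S'-\lambda\{x_1\}-(1-\lambda)\{x_2\}$, which is exactly where convexity of $S'$ enters. Your route is shorter and more modular, hiding the set algebra inside the marginalization theorem (convexity of $S'$ enters only through the log-concavity of $\ind_{S'}$); the paper's route uses only the weaker log-concave-measure statement and makes the geometric role of convexity explicit. Both arguments equally require $\Sigma$ to be non-degenerate so that the density exists, which you correctly trace back to the controllability assumption.
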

\begin{proof}
	See Appendix \ref{sec:proofs_transitions_theorems}.
\end{proof}
In what follows we transform the probabilities involved in \eqref{eq:probability_bounds} to set-membership ones $\pr(\tzeta_{s,x}\in S(x))$, where $S(x)$ is a polytope, but neither necessarily convex nor linear on $x$. Afterwards, we break them down to simpler ones and employ some convex relaxations, such that the set of integration of the resulting Gaussian integrals is convex and linear on $x$ and Lemma \ref{lemma:log-concave} is enabled. Finally, we end up with optimization problems of log-concave functions over the hyperrectangle $\reg$, and solve them to obtain lower and upper bounds on the expressions in \eqref{eq:probability_bounds}. 

\subsection{Transition Probabilities as Set-Membership Probabilities}
For now, let us focus on transitions from any state $(\reg,k)\in Q_\imc\setminus\uns$ to any state $(\sreg,s)\in Q_\reg\times\mathbb{N}_{[1,\overline{k}]}$ :
\begin{equation*}
	(\max_{x\in\reg}\text{ or})\quad \min_{x\in\reg}\pr(\zeta(s;x)\in \sreg,\tau(x)=s)
\end{equation*} 
Later, in Section \ref{sec:transitions_to_uns}, we show how transitions to $\uns$ can be treated similarly to the case above. Moreover, remember that for $s=0$ the above probability is trivially 0 (see Remark \ref{rem:zero_tau}).  

Define the following hyperrectangle:
\begin{equation*}
	\Phi(x):=\{y\in\real^n: \phi(y,x)\leq 0\}=\{y\in\real^n:|y-x|_\infty\leq\epsilon\},
\end{equation*}
Note that $\Phi(x)$ is convex and linear on $x$: $\Phi(x) = \Phi(0) + \{x\}$. Moreover, it is such that $\zeta(t;x)\in\Phi(x)\iff\phi(\zeta(t;x),x)\leq 0$. Thus, the following equivalences hold:
\begin{equation*}
	\begin{aligned}
		&\text{if }s\in\mathbb{N}_{[1,\overline{k}-1]}: \tau(x) = s \iff \tilde{\zeta}_{s,x}\in\Phi^{s-1}(x)\times\overline{\Phi}(x)\\
		&\text{if }s=\overline{k}: \tau(x) = s=\overline{k} \iff \tilde{\zeta}_{\overline{k}-1,x}\in\Phi^{\overline{k}-1}(x)
	\end{aligned}
\end{equation*}
where, for brevity, in the case where $s=1$ we have abusively denoted $\Phi^0(x)\times\overline{\Phi}(x)=\overline{\Phi}(x)$. In words, when $s\neq \overline{k}$, the intersampling time is $s$ if and only if the state belongs to $\Phi(x)$ at all checking times $1,2,\dots,s-1$ and at time $s$ it lies outside $\Phi(x)$. When $s=\overline{k}$, it suffices that the state belongs to $\Phi(x)$ at all checking times $1,2,\dots,\overline{k}-1$. Thus, for $s\in\mathbb{N}_{[1,\overline{k}-1]}$:
\begin{equation}\label{eq:set-membership1}
	\begin{aligned}	
		\pr(\zeta(s;x)\in \sreg,\tau(x)=s) &= \pr\Big(\tilde{\zeta}_{s,x}\in\Phi^{s-1}(x)\times(\overline{\Phi}(x)\cap \sreg)\Big)\\
		&=\int_{\Phi^{s-1}(x)\times(\overline{\Phi}(x)\cap \sreg)}\hspace{-5mm} \normal(dz|\mu_{\tilde{\zeta}_{s,x}},\Sigma_{\tilde{\zeta}_{s,x}})
	\end{aligned}
\end{equation}
and for $s=\overline{k}$:
\begin{equation}\label{eq:set-membership2}
	\begin{aligned}
		\pr(\zeta(\overline{k};x)\in \sreg,\tau(x)=\overline{k}) &= \pr\Big(\tilde{\zeta}_{\overline{k},x}\in\Phi^{\overline{k}-1}(x)\times \sreg\Big)\\
		&=\int_{\Phi^{\overline{k}-1}(x)\times \sreg}\hspace{-5mm} \normal(dz|\mu_{\tilde{\zeta}_{\overline{k},x}},\Sigma_{\tilde{\zeta}_{\overline{k},x}})
	\end{aligned}
\end{equation}
In the following we combine \eqref{eq:set-membership1}-\eqref{eq:set-membership2} with some convex relaxations, to enable Lemma \ref{lemma:log-concave} and obtain bounds on $(\max_{x\in\reg}$ and) $\min_{x\in\reg}\pr(\zeta(s;x)\in \sreg,\tau(x)=s)$ through solving optimization problems with log-concave functions. In particular, observe that $\mu_{\tilde{\zeta}_{s,x}}$ is already an affine function of $x$ (see Proposition \ref{prop:normal}), thus satisfying one of the two conditions of Lemma \ref{lemma:log-concave}. Hence, our efforts focus on transforming the integration sets in \eqref{eq:set-membership1}-\eqref{eq:set-membership2} such that they become linear on $x$ and convex. 

\subsection{Lower Bounds on Transition Probabilities}
Let us start by determining lower bounds on:
\begin{equation} \label{eq:main_min_prob}
	\min_{x\in\reg}\pr(\zeta(s;x)\in \sreg,\tau(x)=s)
\end{equation}

The special case when $s=\overline{k}$, which is given by \eqref{eq:set-membership2}, is simple. Observe that the set $\Phi^{\overline{k}-1}(x)\times \sreg$ is convex (since $\Phi(x)$ and $\sreg\in Q_\reg$ are hyperrectangles) and linear on $x$, as it can be written as:
\begin{equation*}
	\begin{aligned}
		\Phi^{\overline{k}-1}(x)\times \sreg = \Phi^{\overline{k}-1}(0)&\times \sreg+ 	\begin{bmatrix}
			I_{n_\zeta} &I_{n_\zeta} &\dots &I_{n_\zeta} &0_{n_\zeta}
		\end{bmatrix}^\top \{ x\}
	\end{aligned}
\end{equation*}
Thus, when $s=\overline{k}$, the objective function $\pr(\zeta(s;x)\in \sreg,\tau(x)=s)$ of minimization problem \eqref{eq:main_min_prob} is log-concave (due to \eqref{eq:set-membership2} and Lemma \ref{lemma:log-concave}). The constraint set $\reg$ is a hyperrectangle. Thus, the minimization problem attains its solution at one of the vertices of $\reg$ \cite[pp. 343, Theorem 32.2]{rockafellar2015convex}; we simply have to evaluate the objective function for each of the vertices, to find the minimum.
\begin{figure*}[t!]
	\begin{subfigure}[t]{0.32\linewidth}
		\centering
		\includegraphics[height = 1.2in]{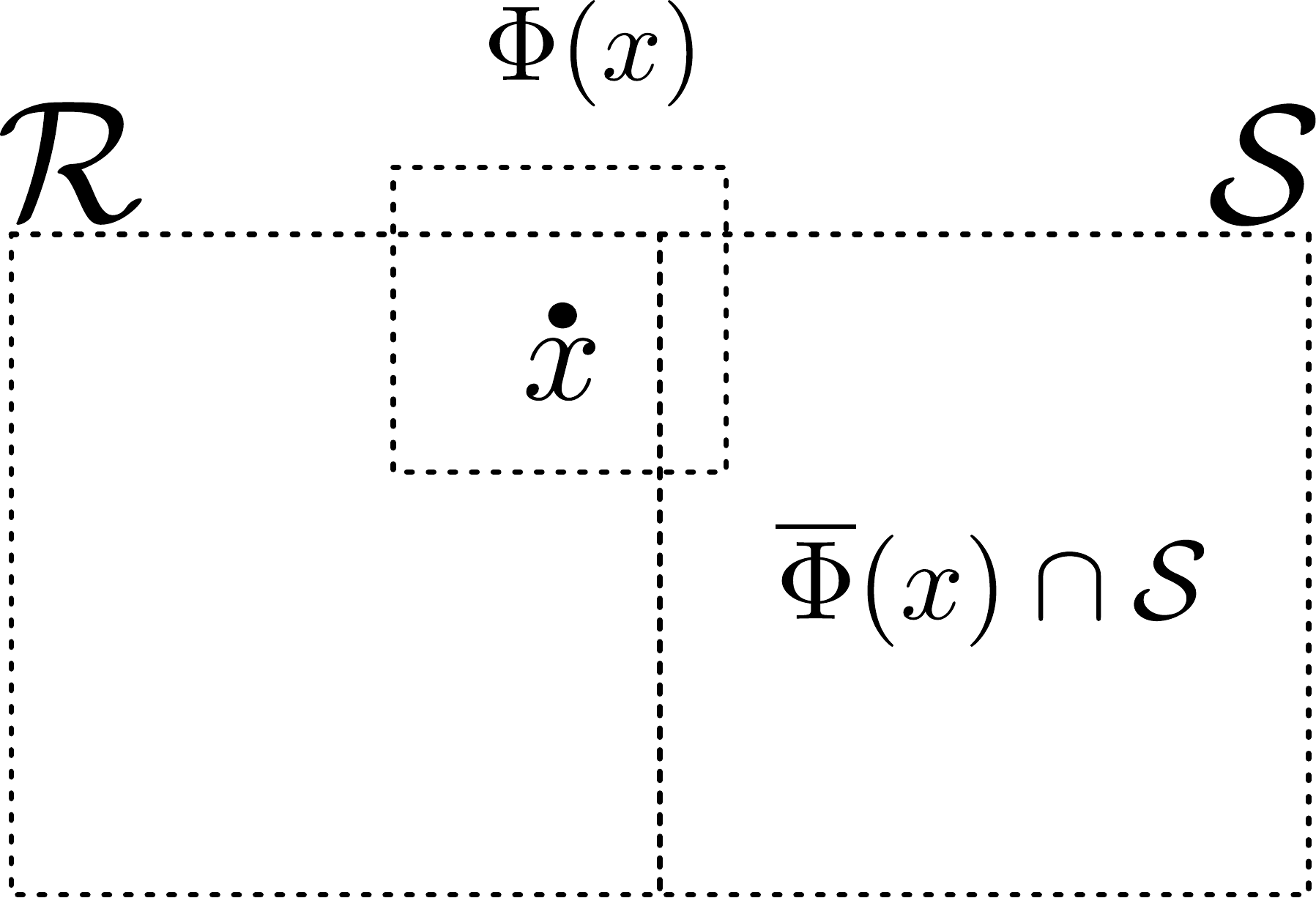}
		\caption{The set $\reg$ is the dashed square on the left, $\sreg$ is the one on the right, and $\Phi(x)$ is the one centered at $x$. For the given $x$ on the figure, $\overline{\Phi}(x)\cap \sreg$ is non-convex. For different $x\in\reg$ the set $\overline{\Phi}(x)\cap \sreg$ has a different shape; thus, $\overline{\Phi}(x)\cap \sreg$ is not linear on $x$.}
		\label{fig:nonconvex_intersection}
	\end{subfigure}~~~
	\begin{subfigure}[t]{0.32\linewidth}
		\centering
		\includegraphics[height = 1.2in]{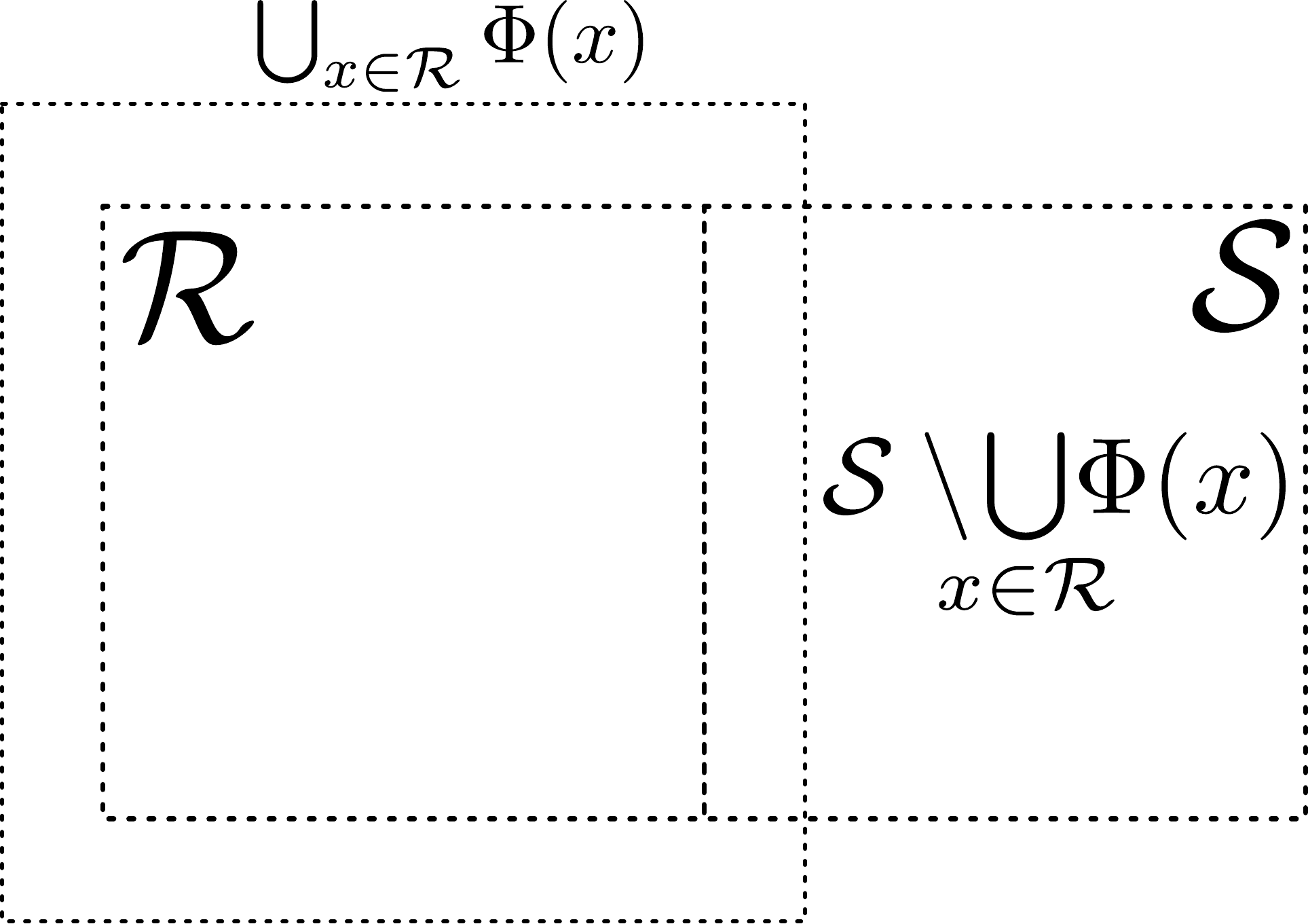}
		\caption{The set $\bigcup_{x\in\reg}\Phi(x)$, and consequently $\sreg \setminus \bigcup_{x\in\reg}\Phi(x)$, does not depend on $x$. The set $\sreg \setminus \bigcup_{x\in\reg}\Phi(x)$ can be partitioned into a finite number (minimum one, here) of hyperrectangles.}
		\label{fig:intersection1}
	\end{subfigure}~~
	\begin{subfigure}[t]{0.32\linewidth}
		\centering
		\includegraphics[height = 1.2in]{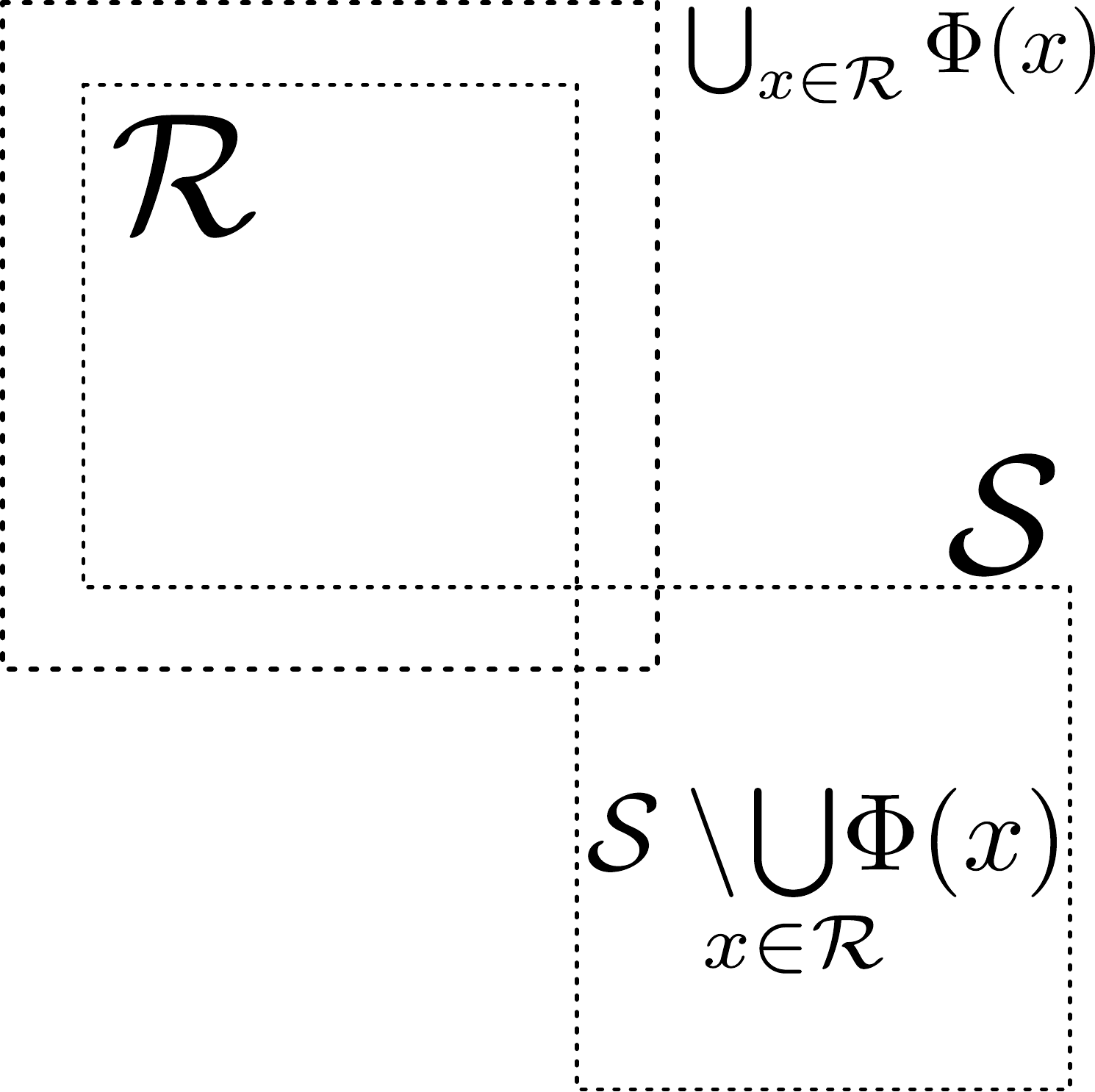}
		\caption{The set $\sreg \setminus \bigcup_{x\in\reg}\Phi(x)$ can be partitioned into a finite number (minimum two, here) of hyperrectangles.}
		\label{fig:intersection2}
	\end{subfigure}
	\caption{The interplay between sets $\reg$, $\sreg$, $\overline{\Phi}(x)\cap \sreg$ and $\sreg \setminus \bigcup_{x\in\reg}\Phi(x)$.}
\end{figure*}

When $s\neq\overline{k}$, the set of integration in \eqref{eq:set-membership1} is neither convex nor linear on $x$ due to $\overline{\Phi}(x)\cap \sreg$ (see Figure \ref{fig:nonconvex_intersection}); thus, we cannot invoke Lemma \ref{lemma:log-concave} and there is no indication that it is straightforward to compute \eqref{eq:main_min_prob}. In this case, we resort to convex relaxations, each of which yield a lower bound on \eqref{eq:main_min_prob} that can be computed easily. These are the following:

\paragraph{Relaxation 1} Notice that $\overline{\Phi}(x)\cap \sreg = \sreg\setminus \Phi(x)$, for any $x\in\reg$. Since $\Phi(x)\subseteq\bigcup_{x\in\reg}\Phi(x)$ for all $x$, it follows that:
\begin{equation*}
	\overline{\Phi}(x)\cap \sreg=\sreg\setminus \Phi(x) \supseteq \sreg \setminus \bigcup_{x\in\reg}\Phi(x)
\end{equation*}
For examples of $\sreg \setminus \bigcup_{x\in\reg}\Phi(x)$ see Figures \ref{fig:intersection1} and \ref{fig:intersection2}. Observe that, since $\bigcup_{x\in\reg}\Phi(x)$ does not depend on $x$, the set $\sreg \setminus \bigcup_{x\in\reg}\Phi(x)$ does not depend on $x$; it is a fixed set, in contrast to $\overline{\Phi}(x)\cap \sreg$. Moreover, since both $\sreg$ and $\bigcup_{x\in\reg}\Phi(x)$ are hyperrectangles, then $\sreg \setminus \bigcup_{x\in\reg}\Phi(x)$ can always be partitioned into a finite number of hyperrectangles $\sreg_1,\dots,\sreg_r$, where $r\leq n$ and $r=0$ in the case where $\sreg \setminus \bigcup_{x\in\reg}\Phi(x)$ is empty. Thus:
\begin{equation}\label{eq:min_relax1}
	\begin{aligned}
		\min\limits_{x\in\reg}\int_{\Phi^{s-1}(x)\times(\overline{\Phi}(x)\cap \sreg)}\hspace{-5mm} \normal(dz|\mu_{\tilde{\zeta}_{s,x}},\Sigma_{\tilde{\zeta}_{s,x}}) \geq&\\ \min\limits_{x\in\reg} \int_{\Phi^{s-1}(x)\times(\sreg \setminus \bigcup_{x\in\reg}\Phi(x))}\hspace{-5mm} \normal(dz|\mu_{\tilde{\zeta}_{s,x}},\Sigma_{\tilde{\zeta}_{s,x}})\geq&\\
		\sum_{i=1}^{r}\min\limits_{x\in\reg}\int_{\Phi^{s-1}(x)\times \sreg_i}\hspace{-5mm} \normal(dz|\mu_{\tilde{\zeta}_{s,x}},\Sigma_{\tilde{\zeta}_{s,x}})&
	\end{aligned}
\end{equation}
The integration sets $\Phi^{s-1}(x)\times \sreg_i$ are convex and linear on $x$. Thus, in the last expression of \eqref{eq:min_relax1} we are dealing with log-concave objective functions, and the $r$ minimization problems attain their minimum at vertices of $\reg$. Hence, we easily solve the $r$ minimization problems to obtain a lower bound on \eqref{eq:main_min_prob}.

\paragraph{Relaxation 2} Here, we employ the law of total probability to write:
\begin{equation*}
	\begin{aligned}
		\pr\Big(\tilde{\zeta}_{s,x}\in\Phi^{s-1}(x)\times(\overline{\Phi}(x)\cap \sreg)\Big) =&\\ \pr\Big(\tilde{\zeta}_{s,x}\in\Phi^{s-1}(x)\times \sreg\Big) - \pr\Big(\tilde{\zeta}_{s,x}\in\Phi^{s-1}(x)\times(\Phi(x)\cap \sreg)\Big)&
	\end{aligned}
\end{equation*}
which gives the following relationship:
\begin{equation}\label{eq:min_relax2}
	\begin{aligned}
		\min\limits_{x\in\reg}\pr\Big(\tilde{\zeta}_{s,x}\in\Phi^{s-1}(x)\times(\overline{\Phi}(x)\cap \sreg)\Big) \geq&\\ \min\limits_{x\in\reg}\pr\Big(\tilde{\zeta}_{s,x}\in\Phi^{s-1}(x)\times \sreg\Big) -&\\ \max\limits_{x\in\reg}\pr\Big(\tilde{\zeta}_{s,x}\in\Phi^{s-1}(x)\times(\Phi(x)\cap \sreg)\Big)&
	\end{aligned}
\end{equation}
The minimization problem in the right-hand side of \eqref{eq:min_relax2} is similar to the ones discussed before (log-concave objective function and hyperrectangle constraint set), and the minimum can be computed easily. However, the set $\Phi(x)\cap \sreg$ not being linear on $x$ makes the maximization problem hard to solve. By employing that $\Phi(x)\cap \sreg\subseteq \sreg\cap\bigcup_{x\in\reg}\Phi(x)$, we relax it by writing:
\begin{align*}
	\max\limits_{x\in\reg}\pr\Big(\tilde{\zeta}_{s,x}\in\Phi^{s-1}(x)\times(\Phi(x)\cap \sreg)\Big) \leq&\\
	\max\limits_{x\in\reg}\pr\bigg(\tilde{\zeta}_{s,x}\in\Phi^{s-1}(x)\times\Big(\sreg\cap\bigcup_{x\in\reg}\Phi(x)\Big)\bigg)
\end{align*}
The set $\sreg\cap\bigcup_{x\in\reg}\Phi(x)$ is a (possibly empty) hyperrectangle and does not depend on $x$; thus, $\Phi^{s-1}(x)\times\Big(\sreg\cap\bigcup_{x\in\reg}\Phi(x)\Big)$ is convex and linear on $x$. Hence, the maximization problem in the right-hand side of the above equation is a convex program (log-concave objective function over the convex constraint set $\reg$), and can be easily solved via regular convex optimization techniques. By computing the exact minimum in the right-hand side of \eqref{eq:min_relax2} and an upper bound on the maximum-term as discussed here, we obtain a lower bound on \eqref{eq:main_min_prob}.

\paragraph{Relaxation 3} Continuing from \eqref{eq:min_relax2}, we propose a different relaxation for the maximization problem in the right-hand side of \eqref{eq:min_relax2}. Specifically, by employing Bayes's rule:
\begin{equation}\label{eq:min_relax_3}
	\begin{aligned}
		\max\limits_{x\in\reg}\pr\Big(\tilde{\zeta}_{s,x}\in\Phi^{s-1}(x)\times(\Phi(x)\cap \sreg)\Big) \leq&\\ \max\limits_{x\in\reg}\pr\Big(\tilde{\zeta}_{s,x}\in\Phi^{s}(x)| \zeta_{s,x}\in \sreg\Big) \cdot \max\limits_{x\in\reg}\pr(\zeta_{s,x}\in \sreg)&
	\end{aligned}
\end{equation}
The term $\max\limits_{x\in\reg}\pr(\zeta_{s,x}\in \sreg)$ can be computed exactly easily, as $\pr(\zeta_{s,x}\in \sreg)$ is log-concave on $x$. For the term $\max\limits_{x\in\reg}\pr\Big(\tilde{\zeta}_{s,x}\in\Phi^{s}(x)| \zeta_{s,x}\in \sreg\Big)$, we make use of the following bound:
\begin{proposition}\label{prop:luca_inequality}
	The following holds:
	\begin{equation}\label{eq:luca_inequality}
		\begin{aligned}
			\max\limits_{x\in\reg}\pr\Big(\tilde{\zeta}_{s,x}\in\Phi^{s}(x)| \zeta_{s,x}\in \sreg\Big) \leq&\\ \max\limits_{(x,v)\in\reg\times \sreg} \pr\Big(\tilde{\zeta}_{s,x}\in\Phi^{s}(x)| \zeta_{s,x}=v\Big)&
		\end{aligned}
	\end{equation}
\end{proposition}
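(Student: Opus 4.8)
The plan is to establish \eqref{eq:luca_inequality} by a disintegration (mixture) argument: for each fixed $x$, the conditional probability given the \emph{event} $\zeta_{s,x}\in\sreg$ is a weighted average of the conditional probabilities given the \emph{point} events $\zeta_{s,x}=v$ over $v\in\sreg$, and any weighted average is bounded above by the supremum of the averaged quantity. The outer maximum over $x$ is then handled by noting that the resulting bound is independent of $x$.

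First I would fix an arbitrary $x\in\reg$ and assume $\pr(\zeta_{s,x}\in\sreg)>0$ (otherwise the corresponding factor in \eqref{eq:min_relax_3} vanishes and there is nothing to prove). By Proposition \ref{prop:normal}, $\tzeta_{s,x}$ is a non-degenerate Gaussian vector whose last block is $\zeta_{s,x}$, so the marginal density $p_{\zeta_{s,x}}$ of $\zeta_{s,x}$ and the regular conditional distribution of $\tzeta_{s,x}$ given $\zeta_{s,x}=v$ both exist and are explicit Gaussians. Using the definition of conditional probability together with the law of total probability (disintegration with respect to $\zeta_{s,x}$), I would write
\begin{equation*}
	\pr\big(\tzeta_{s,x}\in\Phi^s(x)\,\big|\,\zeta_{s,x}\in\sreg\big)=\frac{\displaystyle\int_{\sreg}\pr\big(\tzeta_{s,x}\in\Phi^s(x)\,\big|\,\zeta_{s,x}=v\big)\,p_{\zeta_{s,x}}(v)\,dv}{\displaystyle\int_{\sreg}p_{\zeta_{s,x}}(v)\,dv}.
\end{equation*}

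The right-hand side is a convex combination (in the continuous sense) of the pointwise conditional probabilities, the mixing measure being the normalized restriction of $p_{\zeta_{s,x}}$ to $\sreg$, which is a genuine probability measure on $\sreg$. Since a weighted average never exceeds the supremum of the averaged integrand, this yields
\begin{equation*}
	\pr\big(\tzeta_{s,x}\in\Phi^s(x)\,\big|\,\zeta_{s,x}\in\sreg\big)\leq\sup_{v\in\sreg}\pr\big(\tzeta_{s,x}\in\Phi^s(x)\,\big|\,\zeta_{s,x}=v\big)\leq\max_{(x,v)\in\reg\times\sreg}\pr\big(\tzeta_{s,x}\in\Phi^s(x)\,\big|\,\zeta_{s,x}=v\big),
\end{equation*}
where the last inequality holds because the fixed $x$ is itself a point of $\reg$. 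As the rightmost expression does not depend on the fixed $x$, taking the maximum over $x\in\reg$ on the left preserves the inequality, which is precisely \eqref{eq:luca_inequality}.

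The main obstacle I expect is making the disintegration step fully rigorous: one must justify the existence and use of the regular conditional distribution and confirm that the pointwise conditionals appearing here coincide with the standard Gaussian conditionals obtained by conditioning a jointly Gaussian vector on its last block. This is exactly where non-degeneracy of $\tzeta_{s,x}$ (Assumption \ref{assum1}, item \ref{assum1_controllability}) is essential, since it guarantees the densities are well defined and the conditional is again Gaussian. A minor point to dispatch cleanly is that $\pr(\tzeta_{s,x}\in\Phi^s(x)\mid\zeta_{s,x}=v)=0$ whenever $v\notin\Phi(x)$, together with the degenerate case $\pr(\zeta_{s,x}\in\sreg)=0$; neither affects the inequality, but both should be noted so that all conditional objects are well defined throughout.
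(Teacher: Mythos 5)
Your proof is correct and is essentially the same argument as the paper's: the paper simply defers to \cite[Lemma 2]{blaas2019adversarial}, whose proof is precisely this disintegration step — writing the set-conditioned probability as a mixture of point-conditioned probabilities weighted by the normalized restriction of the Gaussian density to $\sreg$, and bounding the average by the supremum. Your additional remarks on non-degeneracy (Assumption \ref{assum1}, item \ref{assum1_controllability}) and the degenerate case $\pr(\zeta_{s,x}\in\sreg)=0$ correctly fill in the details the citation leaves implicit.
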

\begin{proof}
	The proof is the same as in \cite[Lemma 2]{blaas2019adversarial}.
\end{proof}
To compute the right-hand side of \eqref{eq:luca_inequality}, we use the fact that the random variable $\xi=(\tzeta_{s,x}|\zeta_{l,x}=v)$ is normally distributed:
\begin{corollary}[to Proposition \ref{prop:normal}]\label{cor:conditional_normal}
	Consider the random variable $\xi=(\tzeta_{s,x}|\zeta_{l,x}=v)$, where $l\in\mathbb{N}_{[0,s]}$, and $v\in\real^{n}$. Then $\xi\sim\normal(\mu_\xi(x,v), \Sigma_\xi)$, where:
	\begin{align*}
		&\mu_\xi(x,v) = \expec(\tzeta_{s,x})-\Sigma_{\tzeta_{s,x},\zeta_{l,x}}\Sigma_{\zeta_{l,x}}^{-1}(v-\expec(\zeta_{l,x}))\\
		&\Sigma_\xi = \Sigma_{\tzeta_{s,x}} - \Sigma_{\tzeta_{s,x},\zeta_{l,x}}\Sigma_{\zeta_{l,x}}^{-1}\Sigma_{\zeta_{l,x},\tzeta_{s,x}},
	\end{align*}
	where $\Sigma_{\zeta_{l,x}} = \cov(l,l)$, $\Sigma_{\tilde{\zeta}_{s,x}}$, $\expec(\tilde{\zeta}_{s,x})$ and $\expec(\zeta_{l,x})$ are obtained from Proposition \ref{prop:normal}, and $\Sigma_{\zeta_{l,x},\tzeta_{s,x}} =\Sigma_{\tzeta_{s,x},\zeta_{l,x}}^\top= \begin{bmatrix}
		\cov(l,1) &\cov(l,2) &\dots &\cov(l,s)
	\end{bmatrix}$.
\end{corollary}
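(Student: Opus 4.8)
The plan is to recognize Corollary~\ref{cor:conditional_normal} as a direct instance of the standard Gaussian-conditioning formula, the only real work being to match the block structure of the joint covariance already supplied by Proposition~\ref{prop:normal}. The crucial observation is that, for $l\in\mathbb{N}_{[1,s]}$, the vector $\zeta_{l,x}$ is literally the $l$-th block of the stacked vector $\tzeta_{s,x}$; that is, $\zeta_{l,x}=E_l\tzeta_{s,x}$ for the selection matrix $E_l=\begin{bmatrix}0 &\dots &I_{n_\zeta} &\dots &0\end{bmatrix}$ that extracts block $l$. Consequently $(\tzeta_{s,x},\zeta_{l,x})$ is jointly Gaussian, being an affine image of the Gaussian vector $\tzeta_{s,x}$ of Proposition~\ref{prop:normal}, so no separate argument for joint normality is needed.

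First I would record the joint law in partitioned form,
\begin{equation*}
	\begin{bmatrix}\tzeta_{s,x}\\ \zeta_{l,x}\end{bmatrix}\sim\normal\left(\begin{bmatrix}\expec(\tzeta_{s,x})\\ \expec(\zeta_{l,x})\end{bmatrix},\ \begin{bmatrix}\Sigma_{\tzeta_{s,x}} & \Sigma_{\tzeta_{s,x},\zeta_{l,x}}\\ \Sigma_{\zeta_{l,x},\tzeta_{s,x}} & \Sigma_{\zeta_{l,x}}\end{bmatrix}\right),
\end{equation*}
and then apply the textbook expression for the conditional law of one jointly-Gaussian block given another, namely $(\tzeta_{s,x}\mid\zeta_{l,x}=v)\sim\normal\big(\expec(\tzeta_{s,x})+\Sigma_{\tzeta_{s,x},\zeta_{l,x}}\Sigma_{\zeta_{l,x}}^{-1}(v-\expec(\zeta_{l,x})),\ \Sigma_{\tzeta_{s,x}}-\Sigma_{\tzeta_{s,x},\zeta_{l,x}}\Sigma_{\zeta_{l,x}}^{-1}\Sigma_{\zeta_{l,x},\tzeta_{s,x}}\big)$. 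This is precisely the claimed $\mu_\xi(x,v)$ and $\Sigma_\xi$. It then remains only to identify the blocks with the covariance entries of Proposition~\ref{prop:normal}: since $\zeta_{l,x}=E_l\tzeta_{s,x}$, one has $\Sigma_{\zeta_{l,x}}=E_l\Sigma_{\tzeta_{s,x}}E_l^\top=\cov(l,l)$ and $\Sigma_{\zeta_{l,x},\tzeta_{s,x}}=E_l\Sigma_{\tzeta_{s,x}}=\begin{bmatrix}\cov(l,1) &\cov(l,2) &\dots &\cov(l,s)\end{bmatrix}$, matching the statement exactly (with $\Sigma_{\tzeta_{s,x},\zeta_{l,x}}$ its transpose).

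The one point that genuinely needs care, and the main potential obstacle, is the invertibility of $\Sigma_{\zeta_{l,x}}=\cov(l,l)$, which is what makes $\Sigma_{\zeta_{l,x}}^{-1}$ and hence the conditional law well defined. Here item~\ref{assum1_controllability} of Assumption~\ref{assum1} is essential: controllability of $(A,B_w)$ ensures that $\zeta(t)$ is a \emph{non-degenerate} Gaussian for every $t>0$, so $\cov(l,l)\succ0$ for all $l\geq1$ and the inverse exists. The boundary index $l=0$ is degenerate, since $\zeta_{0,x}=x$ is deterministic; this case can either be excluded or treated trivially, as conditioning on the deterministic value $v=x$ leaves the law of $\tzeta_{s,x}$ unchanged (equivalently, the correction terms involving $\Sigma_{\zeta_{0,x}}$ vanish). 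Everything else is a routine substitution, so I expect the write-up to be short once the block identifications and the non-degeneracy point are stated.
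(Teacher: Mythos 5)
Your proof is correct and takes the same route as the paper: the paper's entire proof is a one-line citation of the well-known conditional-Gaussian formula (Eaton), which you simply spell out in detail --- joint normality via the selection-matrix observation $\zeta_{l,x}=E_l\tzeta_{s,x}$, the block identification with the entries of Proposition \ref{prop:normal}, invertibility of $\cov(l,l)$ from item \ref{assum1_controllability} of Assumption \ref{assum1}, and the degenerate case $l=0$; the paper states none of this explicitly, so your write-up is strictly more careful. One point deserves flagging: the textbook formula you quote has a \emph{plus} sign in the conditional mean, $\expec(\tzeta_{s,x})+\Sigma_{\tzeta_{s,x},\zeta_{l,x}}\Sigma_{\zeta_{l,x}}^{-1}(v-\expec(\zeta_{l,x}))$, whereas the corollary as printed has a \emph{minus} sign, so your assertion that your expression is ``precisely the claimed $\mu_\xi(x,v)$'' is not literally true --- the mismatch is evidently a sign typo in the paper's statement, which your (correct) derivation silently corrects rather than reproduces, and it would be worth saying so explicitly.
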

\begin{proof}
	Straightforward application of the well-known formula for conditional normal distributions \cite{eaton1983multivariate}.
\end{proof}
Thus, we have that:
\begin{equation*}
	\begin{aligned}
		\max\limits_{(x,v)\in\reg\times \sreg} \pr\Big(\tilde{\zeta}_{s,x}\in\Phi^{s}(x)| \zeta_{s,x}=v\Big) =&\\ \max\limits_{(x,v)\in\reg\times \sreg}\int_{\Phi^s(x)}\normal(dz|\mu_\xi(x,v), \Sigma_\xi)&
	\end{aligned}
\end{equation*}
Observe that $\mu_\xi(x,v)$ is affine on the optimization variables $(x,v)$, and $\Phi^s(x)$ is obviously convex and linear on $x$. Thus, the objective function of the above maximization problem is log-concave. Finally, since the set of constraints $\reg\times \sreg$ is convex, we deduce that computing the right-hand side of \eqref{eq:luca_inequality} is a convex program. Combining \eqref{eq:luca_inequality} with \eqref{eq:min_relax_3} and \eqref{eq:min_relax2} yields an easily computable bound on \eqref{eq:main_min_prob}.

\begin{remark}
	Gaussian integrals over hyperrectangles are often encountered in fields such as statistics and learning, and many algorithms exist for their numerical computation (e.g., Genz's algorithm \cite{genz1992numerical} or python's \emph{scipy.stats.multivariate\_normal} \cite{mvnun}). 
\end{remark}

\subsection{Upper Bounds on Transition Probabilities}
We proceed to computing upper bounds on:
\begin{equation} \label{eq:main_max_prob}
	\max_{x\in\reg}\pr(\zeta(s;x)\in \sreg,\tau(x)=s)
\end{equation}
Again, the case where $s=\overline{k}$ is easy: it corresponds to a convex program, and \eqref{eq:main_max_prob} is computed exactly. For the case where $s\neq\overline{k}$, we employ a relaxation similar to Relaxation 3 described in the previous. In particular, as in \eqref{eq:min_relax2}, we write:
\begin{equation}\label{eq:max_relax}
	\begin{aligned}
		\max\limits_{x\in\reg}\pr\Big(\tilde{\zeta}_{s,x}\in\Phi^{s-1}(x)\times(\overline{\Phi}(x)\cap \sreg)\Big) \leq&\\ \max\limits_{x\in\reg}\pr\Big(\tilde{\zeta}_{s,x}\in\Phi^{s-1}(x)\times \sreg\Big) -&\\ \min\limits_{x\in\reg}\pr\Big(\tilde{\zeta}_{s,x}\in\Phi^{s-1}(x)\times(\Phi(x)\cap \sreg)\Big)&
	\end{aligned}
\end{equation}
The term $\max\limits_{x\in\reg}\pr\Big(\tilde{\zeta}_{s,x}\in\Phi^{s-1}(x)\times \sreg\Big)$ is computed easily, through convex optimization. For the other term in the right-hand side of \eqref{eq:max_relax}, we write as in \eqref{eq:min_relax_3}:
\begin{equation}\label{eq:max_relax2}
	\begin{aligned}
		\min\limits_{x\in\reg}\pr\Big(\tilde{\zeta}_{s,x}\in\Phi^{s-1}(x)\times(\Phi(x)\cap \sreg)\Big) \geq&\\ \min\limits_{x\in\reg}\pr\Big(\tilde{\zeta}_{s,x}\in\Phi^{s}(x)| \zeta_{s,x}\in \sreg\Big) \cdot \min\limits_{x\in\reg}\pr(\zeta_{s,x}\in \sreg)&
	\end{aligned}
\end{equation}
Given the discussion of the previous section, it is clear that: a) $\min\limits_{x\in\reg}\pr(\zeta_{s,x}\in \sreg)$ is computed exactly (by traversing the vertices of $\reg$), and b) a lower bound on $\min\limits_{x\in\reg}\pr\Big(\tilde{\zeta}_{s,x}\in\Phi^{s}(x)| \zeta_{s,x}\in \sreg\Big)$ is computed by employing Proposition \ref{prop:luca_inequality} and Corollary \ref{cor:conditional_normal}, which yield log-concave minimization over the polytope $\reg\times \sreg$.

\subsection{Transitions to $\uns$}\label{sec:transitions_to_uns}
According to the last two inequalities in \eqref{eq:probability_bounds}, for transitions to $\uns$ we are interested in:
\begin{equation*}
	(\min_{x\in\reg}\text{ or})\quad \max_{x\in\reg}\pr(\zeta(s;x)\in\overline{X},\tau(x)=s)
\end{equation*}
We focus on the maximization, as minimization follows identical steps. By the law of total probability, we have:
\begin{equation}
	\begin{aligned}
		\max_{x\in\reg}\pr(\zeta(s;x)\in\overline{X},\tau(x)=s) \leq&\\ \max_{x\in\reg}\pr(\tau(x)=s) - \min_{x\in\reg}\pr(\zeta(s;x)\in X,\tau(x)=s)&
	\end{aligned}
\end{equation}
Note that, since $X$ is a hyperrectangle, the term $\min_{x\in\reg}\pr(\zeta(s;x)\in X,\tau(x)=s)$ can be treated exactly as discussed in the previous sections (where $X$ takes the place of $\sreg$). Regarding $\pr(\tau(x)=s)$, we have the following two cases:

\paragraph{$s = \overline{k}$} In this case:
\begin{equation*}
	\pr(\tau(x)=s) = \pr(\tzeta_{s,x}\in\Phi^{s-1}(x))
\end{equation*}
Thus, $\max_{x\in\reg}\pr(\tau(x)=s) = \max_{x\in\reg} \pr(\tzeta_{s-1,x}\in\Phi^{s-1}(x))$, which can be computed easily (log-concave objective function and hyperrectangular constraint set).

\paragraph{$s \neq \overline{k}$} In this case, by the law of total probability:
\begin{equation*}
	\begin{aligned}
		\pr(\tau(x)=s) = \pr(\tzeta_{s,x}\in\Phi^{s-1}(x)\times\overline{\Phi}(x))=&\\
		\pr(\tzeta_{s-1,x}\in\Phi^{s-1}(x)) - \pr(\tzeta_{s,x}\in\Phi^{s}(x))
	\end{aligned}
\end{equation*} 
where when $s=1$ we have abusively denoted $\tzeta_{0,x} = x$ and $\Phi^{0}(x)=x$. Thus, we have:
\begin{equation*}
	\begin{aligned}
		\max\limits_{x\in\reg}\pr(\tau(x)=s) \leq&\\
		\max\limits_{x\in\reg}\pr(\tzeta_{s-1,x}\in\Phi^{s-1}(x)) - \min\limits_{x\in\reg}\pr(\tzeta_{s,x}\in\Phi^{s}(x))
	\end{aligned}
\end{equation*}
and both terms in the right-hand side can be computed easily as discussed in the previous sections (log-concave objective functions and hyperrectangular constraint sets).

\section{Numerical Examples} \label{sec:examples}
We, now, demonstrate our theoretical results with a numerical example. 
Consider a stochastic PETC system \eqref{snh}-\eqref{trig_cond} with:
\begin{equation*}
	A = \begin{bmatrix}
		-4 &3\\ -2 &1
	\end{bmatrix}, B = \begin{bmatrix}
		1\\0\end{bmatrix}, 
	K = \begin{bmatrix}
		-2 &3
	\end{bmatrix}, B_w = \begin{bmatrix}
		2.5 &0 \\0 &2.5
	\end{bmatrix}
\end{equation*}
and $\epsilon = 0.25$, $h = 0.006$, $\overline{k} = 3$. We are interested in assessing the sampling behaviour of the system for initial conditions in $X = [-1.2,1.2]^2$. Following Remark \ref{rem:partition_Y}, we partition $Y=[-2,2]^2$ into 2500 equal rectangles, and construct the IMC as described in the previous.

First, consider the multiplicative reward from Example 3 in Section \ref{sec:sampling_behaviour} and a horizon $N=5$. Recall that, in this case, the expected reward expresses the probability that there is no intersampling time $s=\overline{k}$ in the first 5 triggers. 
As dictated by Theorem \ref{main_theorem}, we equip the IMC with rewards $\underline{R},\overline{R}$, which are as follows for any $q\in Q_{\imc}$:
\begin{align*}
	&\underline{R}(q) = \left\{\begin{aligned}
		&0,\quad \text{if }q=\uns \text{ or } \proj_{\mathbb{N}}(q)=\overline{k}\\
		&1, \quad \text{otherwise}
	\end{aligned}\right.\\
	&\overline{R}(q) = \left\{\begin{aligned}
		&0,\quad \text{if } \proj_{\mathbb{N}}(q)=\overline{k}\\
		&1, \quad \text{otherwise}
	\end{aligned}\right.
\end{align*}
For all $q_0\in Q_{\imc}\setminus\uns$, we calculate $\inf_{\adv\in\Pi}\hspace{-.5mm}\expec^{q_0}_\adv[\underline{g}_{\mul,N}(\tilde{\omega})]$ and $\sup_{\adv\in\Pi}\hspace{-.5mm}\expec^{q_0}_\adv[\overline{g}_{\mul,N}(\tilde{\omega})]$, by employing the value iteration introduced in \eqref{eq:viimc_mul}. The adversary that gives rise to each bound is the so-called \emph{o-maximizing MDP} and can be found easily (see \cite{givan_bmdps} and \cite{lahijanian2015dt_imcs}).
\begin{figure}[h!]
	\centering
	\includegraphics[width = 3in]{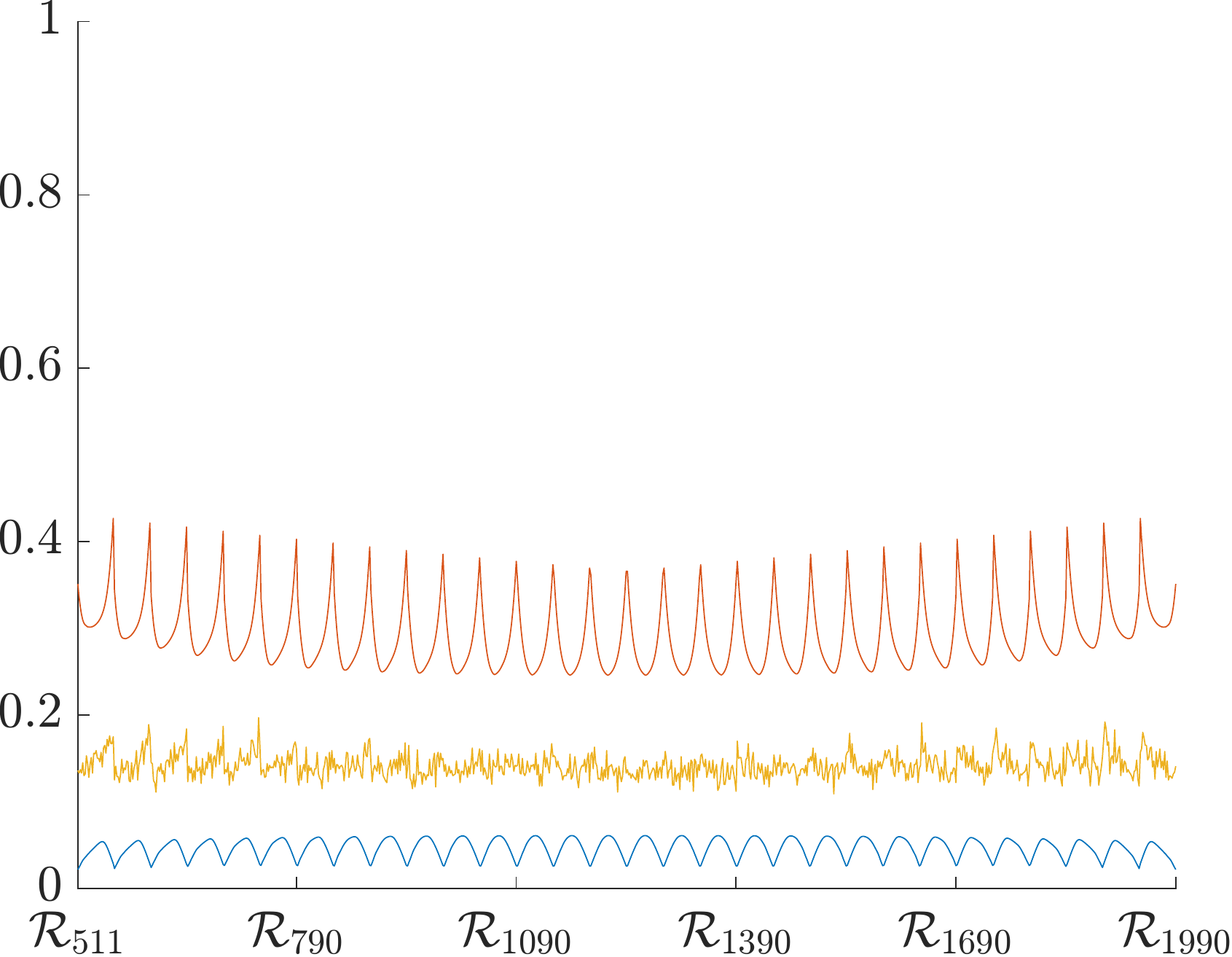}
	\caption{The blue and red lines are the computed lower and upper bounds, respectively, on the expected multiplicative reward from Example 3 in Section \ref{sec:sampling_behaviour} starting from any initial condition $x_0\in \reg_i$ (initial intersampling time is assumed $s_0=0$), for all regions $\reg_i\subset[-1.2,1.2]^2$ in the partition. The yellow (middle) line is the statistical estimate of the expected reward for a random initial condition from each region.}
	\label{fig:not_sample_inside}
\end{figure}

The obtained bounds for all $q_0=(\reg,0)\in Q_\reg\times\{0\}$, with $\reg\subset[-1.2,1.2]^2$, are shown in Figure \ref{fig:not_sample_inside}. We only consider the case where the initial intersampling time $s_0=0$, as commented in Remark \ref{rem:init_cond}\footnote{This is with no loss to generality, as $s_0$ does not affect the evolution of the system: for different $s_0$ and the same realization of the Wiener process, the sample path evolves exactly the same.}. From the obtained bounds, one can expect from the system a high probability of sampling with intersampling time $\overline{k}$. Thus, based on that observation, an engineer who is to implement the PETC system, could decide to further increase the maximum allowed intersampling time, in order to allow the system to sample even less frequently.
\begin{figure}[h!]
	\centering
	\includegraphics[width = 3in]{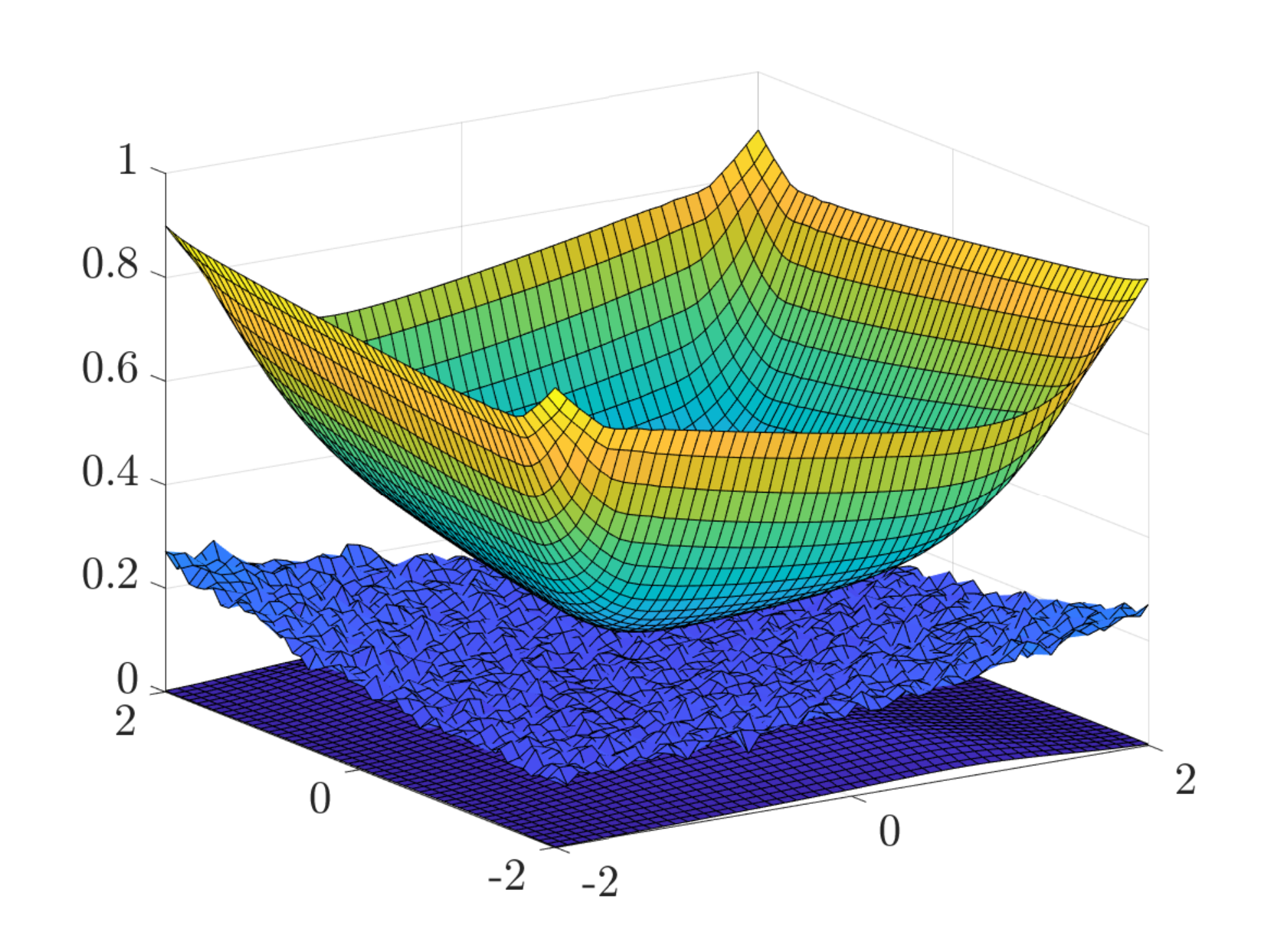}
	\caption{Surface plot of the obtained lower and upper bounds on the expected multiplicative reward from Example 3 in Section \ref{sec:sampling_behaviour} for all regions $\reg_i\subset[-2,2]^2$ in the partition (x-axis). The surface on the bottom is the lower bound, the surface at the top is the upper bound, and the one in the middle is the statistical estimate of the expected reward for a random initial condition from each region, as obtained from simulations.}
	\label{fig:surf_not_sample}
\end{figure}

Figure \ref{fig:not_sample_inside}, also, shows the statistical estimate of the expected reward, as derived by simulations. Specifically, for all $q_0\in Q_{\reg}\times\{0\}$ with $\reg\subset[-1.2,1.2]^2$, we pick a random initial condition $y_0\in q_0$ and simulate 1000 sample paths, with a horizon of 5 triggers (the simulation stops after the 5th trigger). Each sample path that does not generate any intersampling time $s=\overline{k}$ is counted, and the total count is divided by 1000 to obtain a statistical estimate of the true probability. Figure \ref{fig:not_sample_inside} shows that, as expected by Theorem \ref{main_theorem}, the statistical estimate is confined within the computed bounds. Finally, Figure \ref{fig:surf_not_sample} is a surface plot illustrating the obtained bounds and the statistical estimate for all regions $\reg_i\in[-2,2]^2$, supporting what is discussed in Remark \ref{rem:partition_Y}: regions closer to the boundary of the partition correspond to more conservative bounds. 

\begin{figure}[h!]
	\centering
	\includegraphics[width = 3in]{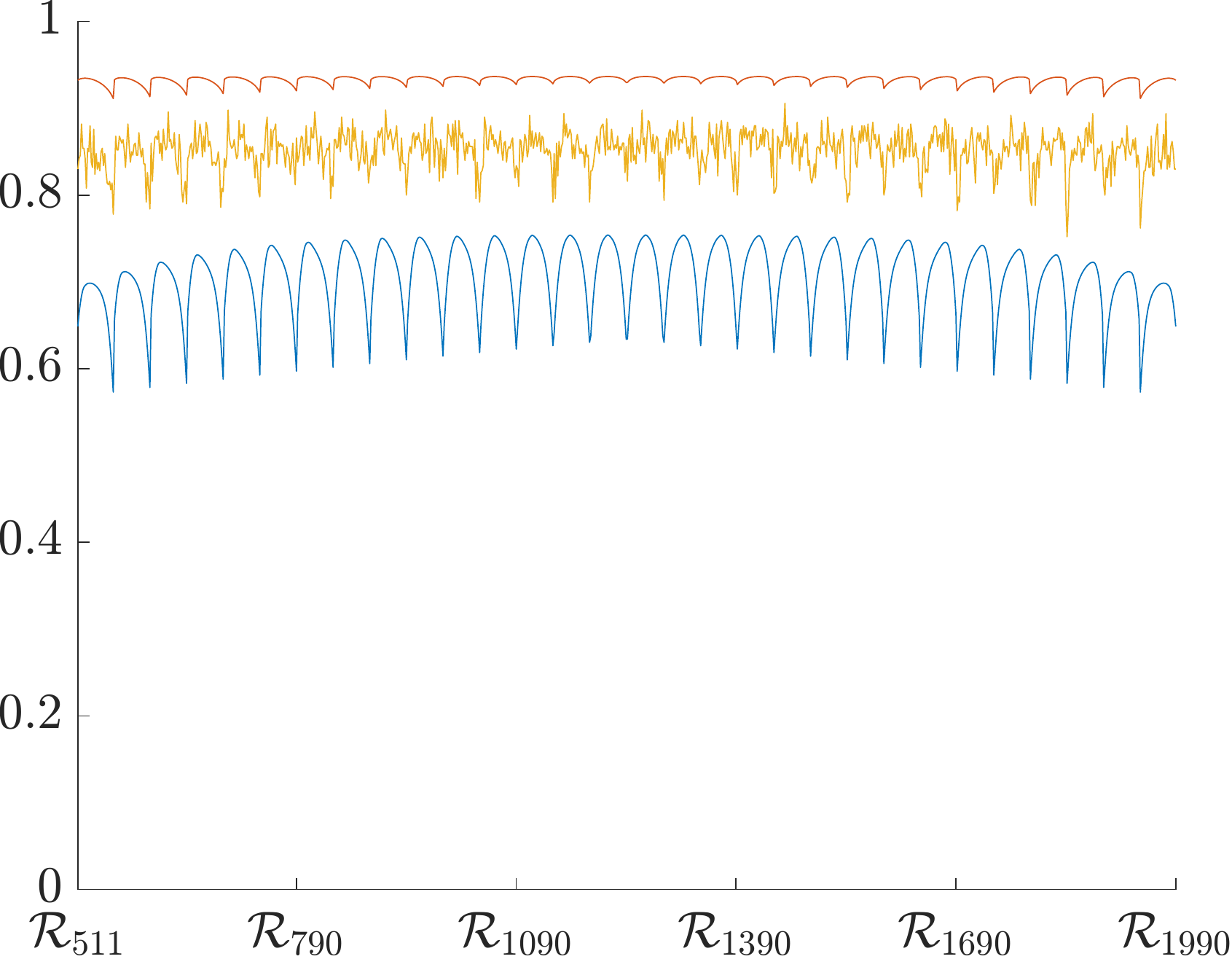}
	\caption{The blue and red lines are lower and upper bounds, respectively, on the bounded-until probability \eqref{eq:example_until} starting from any initial condition $x_0\in \reg_i$ (initial intersampling time is assumed $s_0=0$), for all regions $\reg_i\subset[-1.2,1.2]^2$. The yellow line (the one in the middle) is the statistical estimate of the probability for a random initial condition from each region.}
	\label{fig:sample_while_inside}
\end{figure} 
Next, to demonstrate our results' extension to PCTL, we derive bounds on the following bounded-until probability:
\begin{equation}\label{eq:example_until}
	\begin{aligned}
		\pr^{y_0}_{\Y_N}\Big(\exists i\in\mathbb{N}_{[0,5]} \text{ s.t. }&\proj_{\mathbb{N}}(\omega(i))=\overline{k} \text{ and }\forall k\leq i, \text{ }\omega(k)\notin\uns\Big)
	\end{aligned}
\end{equation} 
This is the probability that the state stays in $Y$ \emph{until} there is a trigger $s=\overline{k}$, in a horizon $N=5$. Figure \ref{fig:sample_while_inside} shows the results.

Finally, for completeness, we calculate bounds on the expected average intersampling time for $N=5$, as introduced in Example 1, Section \ref{sec:sampling_behaviour}. Since we assume $s_0=0$, which implies that we are only interested in the average of the 5 subsequent triggers, we use $N$ in the denominator, instead of $N+1$. The results are illustrated in Figure \ref{fig:avg_inside}. The obtained bounds could be used to compare the average sampling performance of this particular PETC design with some other implementation; e.g. it is evident that, on average, it samples considerably more efficiently than a periodic implementation with period $h$. Alternatively, they could be used to forecast the expected average occupation of the communication channel. 
\begin{figure}
	\centering
	\includegraphics[width = 3in]{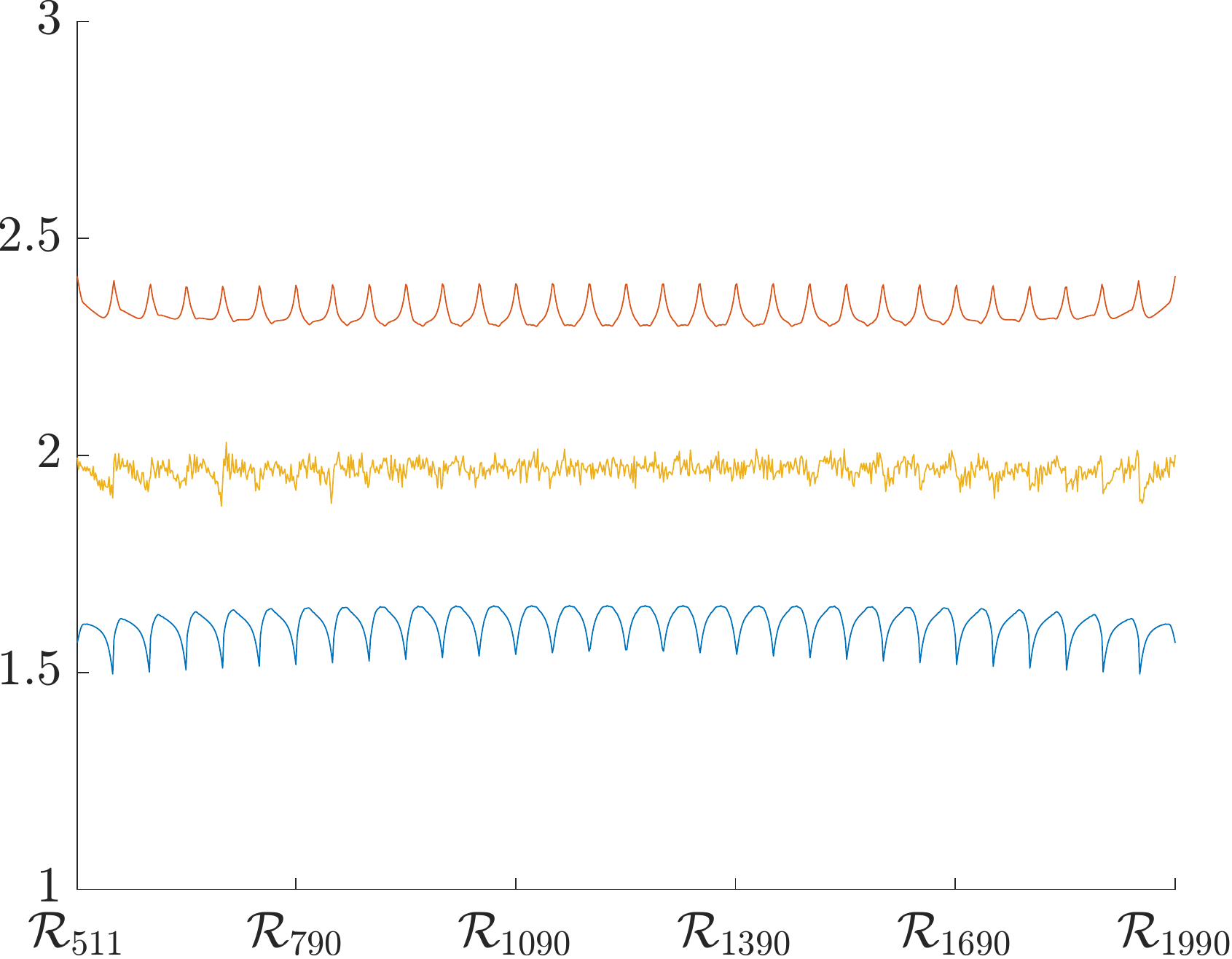}
	\caption{The blue and red lines are lower and upper bounds on the expected average intersampling time starting from any initial condition $x_0\in \reg_i$ (initial intersampling time is assumed $s_0=0$), for regions $\reg_i\subset[-1.2,1.2]^2$. The yellow line (the one in the middle) is the statistical estimate of the expected average for a random initial condition from each region.}
	\label{fig:avg_inside}
\end{figure}

\section{Conclusion}
In this work, we have computed bounds on metrics associated to the sampling behaviour of linear stochastic PETC systems, by constructing IMCs abstracting the sampling behaviour and equipping them with suitable rewards. The metrics are expectations of functions of sequences of intersampling times and state measurements, that take the form of cumulative, average or multiplicative rewards. Numerical examples have been provided to demonstrate the effectiveness of the proposed framework in practice. Specifically, for a given system, we have computed the expected average intersampling time and the probability of triggering with the maximum allowed intersampling time, in a finite horizon Moreover, we have computed bounds on a bounded-until probability, demonstrating extensibility of our approach to PCTL properties. Overall, the framework presented here, enables the formal study  of PETC's sampling behaviour and the assessment of its sampling (vs. control) performance.

Future work will focus on the following: a) extending the class of systems considered, b) investigating how the state-space partition proposed in \cite{gleizer2020scalable} for deterministic linear systems can be employed here, for better and more scalable results, and c) endowing the IMCs with actions, which will allow for scheduling ETC data traffic in networks shared by multiple ETC loops, such that performance criteria are met and optimized (e.g. minimizing packet collisions).

\section{Acknowledgements}
The authors thank Daniel Jarne Ornia and Gabriel de Albuquerque Gleizer for helpful discussions on this work.
\section{Appendix}

\subsection{Technical Lemmas and Proof of Theorem \ref{main_theorem}}\label{app:main_theorem}
In this subsection, we first provide some technical lemmas, and then prove Theorem \ref{main_theorem}. Let us introduce some notation and terminology. We constrain ourselves to Markovian adversaries. The value of such adversaries depends only on the time-step $i$ and the given state $q\in Q_{\imc}$, i.e. $\adv(i,q) = p_{i,q}\in\Gamma_q$. From now on, we abusively write $\adv(i,q,q')=p_{i,q}(q')$, for any $q'\in Q_\imc$, to denote the transition probability from $q$ to $q'$ at time $i$, under adversary $\adv$. 
Moreover, for $s_i,s_{i+1}\in\mathbb{N}_{[0,\overline{k}]}$, $x_i\in\real^{n_\zeta}$, $X_{i+1}\subseteq\real^{n_\zeta}$, denote:
\begin{equation}\label{eq:kernel}
	\begin{aligned}
		&T((X_{i+1},s_{i+1})|(x_i,s_i)):=\pr^{y_0}_{\Y_N}(\omega(i+1)\in (X_{i+1},s_{i+1})|\omega(i)=(x_i,s_i))
	\end{aligned}
\end{equation}
This notation is common in the literature of stochastic systems and $T$ is often called \textit{transition kernel}. Let us abuse notation and write $\int_QT(dy'|y)$, for some $y\in Q$, to denote $\sum_{s'\in\mathbb{N}_{[0,\overline{k}]}}\int_{\real^n}T((dx',s')|y)$.

We proceed to stating the technical lemmas. The first one provides a relationship indicating that the expected cumulative reward can be written as a \emph{value function} defined via \emph{value iteration}, which is a trivial extension of the value iteration in \cite{givan_bmdps} to finite horizons and time-varying adversaries. The second and third lemmas provide some useful bounds, which are employed in the proof of Theorem \ref{main_theorem}. 
\begin{lemma}\label{lemma:vi}
	Given IMC $\sys_{\imc}$ from \eqref{eq:our_imc}, equipped with a reward function $\underline{R}:Q_\imc \to \underline{R}_\max$, any Markovian adversary $\adv\in\Pi$ and any $q_0 \in Q_\imc$, we have that:
	\begin{equation}\label{eq:lemma_vi1}
		\expec_\adv[\sum_{j=i}^N\gamma^{j-i}\underline{R}(\tilde{\omega}(j))|\tilde{\omega}(i)=q_0] = \underline{V}_{\adv,i}(q_0)
	\end{equation}
	where for all $q\in Q_\imc$ and $i\in\mathbb{N}_{[0,N-1]}$:
	\begin{subequations}\label{eq:viimc}
		\begin{align}
			&\underline{V}_{\adv,N}(q) = \underline{R}(q)\\
			&\underline{V}_{\adv,i}(q) = \underline{R}(q) + \gamma\hspace{-2mm}\sum_{q'\in Q_\imc}\hspace{-3mm}\underline{V}_{\adv,i+1}(q')\adv(i, q,q')
		\end{align}
	\end{subequations}
	Similarly, for all $y_0\in Q$:
	\begin{equation}\label{eq:lemma_vi2}
		\expec_{\pr^{y_0}_\Y}[\sum_{j=i}^N\gamma^{j-i}R(\omega(j))|\omega(i)=y_0] = V_{i}(y_0)
	\end{equation}
	where for all $y\in Q$ and $i\in\mathbb{N}_{[0,N-1]}$:
	\begin{subequations}\label{eq:vi_system}
		\begin{align}
			&V_{N}(y) = R(y)\\
			&V_{i}(y) = R(y) + \gamma\int_{Q}V_{i+1}(y')T(dy'|y)
		\end{align}
	\end{subequations} 
	Consequently, we have:
	\begin{equation}\label{eq:lemma_vi3}
		\begin{aligned}
			\expec^{q_0}_\adv[\underline{g}_{\cum,N}(\tilde{\omega})] &= \expec_\adv[\sum_{j=0}^N\gamma^j\underline{R}(\tilde{\omega}(j))|\tilde{\omega}(0)=q_0] \\&= \underline{V}_{\adv,0}(q_0)\\ 
			\expec_{\pr^{y_0}_\Y}[g_{\cum,N}(\omega)] &= \expec_{\pr^{y_0}_\Y}[\sum_{j=0}^N\gamma^jR(\omega(j))|\omega(0)=y_0]\\&= V_{0}(y_0)
		\end{aligned}
	\end{equation}
\end{lemma}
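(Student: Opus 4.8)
The plan is to prove \eqref{eq:lemma_vi1} and \eqref{eq:lemma_vi2} by \emph{backward} (downward) induction on the time index $i$, running from $i=N$ down to $i=0$; the final consequence \eqref{eq:lemma_vi3} then follows by instantiating $i=0$ and noting that $\gamma^{j-0}=\gamma^j$, so that the value function at step $0$ coincides with the cumulative reward $\underline{g}_{\cum,N}$ (resp.\ $g_{\cum,N}$). The two statements are structurally identical: the only difference is that the discrete sum $\sum_{q'\in Q_\imc}(\cdot)\adv(i,q,q')$ in the IMC setting is replaced by integration $\int_Q(\cdot)T(dy'\,|\,y)$ against the transition kernel of \eqref{eq:kernel} in the continuous setting. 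I would therefore carry out one induction in detail and remark that the other is verbatim.

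For \eqref{eq:lemma_vi1}, the base case $i=N$ is immediate: conditioning on $\tilde{\omega}(N)=q_0$ leaves only the $j=N$ term $\gamma^0\underline{R}(q_0)=\underline{R}(q_0)=\underline{V}_{\adv,N}(q_0)$. For the inductive step, assume the claim at $i+1$. I would split off the leading term,
\begin{equation*}
	\sum_{j=i}^N\gamma^{j-i}\underline{R}(\tilde{\omega}(j)) = \underline{R}(\tilde{\omega}(i)) + \gamma\sum_{j=i+1}^N\gamma^{j-(i+1)}\underline{R}(\tilde{\omega}(j)),
\end{equation*}
take the conditional expectation given $\tilde{\omega}(i)=q_0$ (so the first summand is the constant $\underline{R}(q_0)$), and apply the tower property by further conditioning on $\tilde{\omega}(i+1)$. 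The crucial structural fact is that, because $\adv$ is \emph{Markovian} (i.e.\ $\adv(i,q)$ depends only on the current time and state), the process $\tilde{\omega}$ is a time-inhomogeneous Markov chain; hence the conditional expectation of the tail sum given $\{\tilde{\omega}(i+1)=q',\,\tilde{\omega}(i)=q_0\}$ equals its value given only $\{\tilde{\omega}(i+1)=q'\}$, which by the induction hypothesis is $\underline{V}_{\adv,i+1}(q')$. Since the one-step transition probability is $\adv(i,q_0,q')$, this yields exactly $\underline{R}(q_0)+\gamma\sum_{q'\in Q_\imc}\underline{V}_{\adv,i+1}(q')\adv(i,q_0,q')=\underline{V}_{\adv,i}(q_0)$, closing the induction.

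For \eqref{eq:lemma_vi2} the identical argument applies, now invoking that the sampling-behaviour process $\omega$ is Markov by construction of $\pr^{y_0}_\Y$ in \eqref{eq:prob measure2} (its one-step law depends only on $\omega(i)$), and replacing the sum over $Q_\imc$ by integration against the kernel $T$. I expect the main obstacle to be purely measure-theoretic bookkeeping rather than conceptual: one must check that the tower property, i.e.\ the disintegration of $\pr^{y_0}_\Y$ through $T$, is legitimate on the uncountable path space $\borel(\Y_N)$. This is handled by noting that $R$ is bounded and Borel measurable, so every partial sum is integrable, and that the Ionescu-Tulcea construction already guarantees $\pr^{y_0}_\Y$ is the unique measure consistent with the kernels $T$, which is precisely the Markov/disintegration property needed to pass the conditioning through the integral. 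No other step requires more than the boundedness $R\le R_{\max}$ and finiteness of the horizon.
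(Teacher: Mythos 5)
Your proposal is correct and follows essentially the same route as the paper's proof: backward induction on the time index with base case $i=N$, closing the inductive step via the law of total expectation (tower property) together with the Markovian structure of the adversary, treating \eqref{eq:lemma_vi2} as the verbatim continuous analogue, and obtaining \eqref{eq:lemma_vi3} by instantiating $i=0$. The only cosmetic difference is the direction of the algebra---you start from the conditional expectation and peel off the leading term, whereas the paper starts from the value-iteration recursion and substitutes the induction hypothesis before merging the sum into an expectation---but the two computations are the same step for step.
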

\begin{proof}
	We prove \eqref{eq:lemma_vi1} by induction. The proof of \eqref{eq:lemma_vi2} is identical, and then \eqref{eq:lemma_vi3} follows immediately. It obviously holds that $\underline{V}_{\adv,N}(q_0) = R(q_0) = \expec_\adv[\sum_{j=N}^N\gamma^{j-N}\underline{R}(\tilde{\omega}(j))|\tilde{\omega}(N)=q_0]$ for all $q_0 \in Q_\imc$. Now, assume that \eqref{eq:lemma_vi1} holds for some $i\in\mathbb_{N}_{[1,N]}$. Then:
	\begin{align*}
		\underline{V}_{\adv,i-1}(q_0) =&\\ \underline{R}(q_0) + \gamma\hspace{-2mm}\sum_{q'\in Q_\imc}\hspace{-3mm}\underline{V}_{\adv,i}(q')\adv(i-1, q_0,q')=&\\
		\underline{R}(q_0) + \gamma\hspace{-2mm}\sum_{q'\in Q_\imc}\hspace{-3mm}\expec_\adv\Big[\sum_{j=i}^N\gamma^{j-i}\underline{R}(\tilde{\omega}(j))|\tilde{\omega}(i)=q'\Big]\adv(i-1, q_0,q')=&\\
		\underline{R}(q_0) + \expec_\adv\Big[\sum_{j=i}^N\gamma^{j-i+1}\underline{R}(\tilde{\omega}(j))|\tilde{\omega}(i-1)=q_0\Big]=&\\
		\expec_\adv\Big[\underline{R}(q_0) + \sum_{j=i}^N\gamma^{j-i+1}\underline{R}(\tilde{\omega}(j))|\tilde{\omega}(i-1)=q_0\Big] =&\\
		\expec_\adv\Big[\underline{R}(q_0) + \gamma\underline{R}(\tilde{\omega}(i)) + \gamma^2\underline{R}(\tilde{\omega}(i+1))+\dots|\tilde{\omega}(i-1)=q_0\Big] =&\\
		\expec_\adv[\sum_{j=i-1}^N\gamma^{j-i+1}\underline{R}(\tilde{\omega}(j))|\tilde{\omega}(i-1)=q_0]&
	\end{align*}
	where:
	\begin{itemize}
		\item in the second equality we used the induction assumption that we made;
		\item in the third equality we put $\gamma$ inside the expectation, and we used the law of total expectation;
		\item and in the fourth equality we put $\underline{R}(q_0)$ inside the expectation.
	\end{itemize}
	Thus \eqref{eq:lemma_vi1} is proven by induction, and the proof is completed.
\end{proof}

\begin{lemma}\label{lemma_uns}
	Given any adversary $\adv\in\Pi$, for all $y\in\real^{n_\zeta}\times\mathbb{N}_{[1,\overline{k}]}$ and for all $i\in\mathbb{N}_{[1,N]}$:
	\begin{equation}\label{eq:lemma_uns_eq}
		\underline{V}_{\adv,i}(\uns)\leq V_{i}(y)
	\end{equation}
\end{lemma}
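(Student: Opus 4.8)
The plan rests on two observations: that the grouped state $\uns$ is absorbing, which makes $\underline{V}_{\adv,i}(\uns)$ independent of the adversary and explicitly solvable, and that $\underline{R}(\uns)$ is by construction a \emph{global} lower bound on the reward $R$ over every state whose intersampling time lies in $\mathbb{N}_{[1,\overline{k}]}$.

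First I would evaluate the left-hand side. By \eqref{eq:unsafe_state_probabilities} the unique feasible distribution out of $\uns$ places all its mass on $\uns$, so $\adv(i,\uns,q')=\ind_{\{q'=\uns\}}$ for every $\adv\in\Pi$ and every $i$. Substituting this into the value iteration \eqref{eq:viimc} collapses the summation and yields the adversary-free recursion $\underline{V}_{\adv,i}(\uns)=\underline{R}(\uns)+\gamma\,\underline{V}_{\adv,i+1}(\uns)$, with terminal condition $\underline{V}_{\adv,N}(\uns)=\underline{R}(\uns)$; unrolling gives the closed form $\underline{V}_{\adv,i}(\uns)=\underline{R}(\uns)\sum_{k=0}^{N-i}\gamma^{k}$.

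Next I would represent the right-hand side probabilistically through \eqref{eq:lemma_vi2} of Lemma \ref{lemma:vi}, writing $V_i(y)=\expec_{\pr^{y}_\Y}[\sum_{j=i}^N\gamma^{j-i}R(\omega(j))\mid\omega(i)=y]$, and then bound the integrand pathwise. Fix $y=(x_i,s_i)$ with $s_i\in\mathbb{N}_{[1,\overline{k}]}$. The term $\omega(i)=y$ already has intersampling time in $\mathbb{N}_{[1,\overline{k}]}$ by hypothesis, while for each $j>i$ Remark \ref{rem:zero_tau} (i.e. $\pr(\tau(x)=0)=0$) ensures $\proj_{\mathbb{N}_{[0,\overline{k}]}}(\omega(j))\in\mathbb{N}_{[1,\overline{k}]}$ almost surely. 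Thus $\omega(j)\in\real^{n_\zeta}\times\mathbb{N}_{[1,\overline{k}]}$ for all $j\ge i$ almost surely, and the defining minimization of $\underline{R}(\uns)$ in \eqref{eq:underline_R} gives $R(\omega(j))\ge\underline{R}(\uns)$ almost surely for every such $j$.

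Finally, taking expectations of this pathwise bound (each weight $\gamma^{j-i}$ being nonnegative) produces
\begin{equation*}
    V_i(y)\ge\underline{R}(\uns)\sum_{j=i}^N\gamma^{j-i}=\underline{R}(\uns)\sum_{k=0}^{N-i}\gamma^{k}=\underline{V}_{\adv,i}(\uns),
\end{equation*}
which is \eqref{eq:lemma_uns_eq}. I expect the only delicate point to be the almost-sure containment of the entire path tail $\{\omega(j)\}_{j\ge i}$ in $\real^{n_\zeta}\times\mathbb{N}_{[1,\overline{k}]}$: this is exactly where the hypothesis $s_i\ge1$ (handling the initial term) and Remark \ref{rem:zero_tau} (handling all later terms) are indispensable, since they are what allow $R(\omega(j))$ to be replaced by the \emph{global} minimum $\underline{R}(\uns)$ rather than by a merely state-dependent lower bound.
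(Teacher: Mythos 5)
Your proposal is correct and follows essentially the same route as the paper's proof: both exploit the absorbing property of $\uns$ (via Lemma \ref{lemma:vi} and \eqref{eq:unsafe_state_probabilities}) to reduce the left-hand side to $\underline{R}(\uns)\sum_{j=i}^{N}\gamma^{j-i}$, then use Remark \ref{rem:zero_tau} together with the hypothesis $s_i\geq 1$ to conclude that $\omega(j)\in\real^{n_\zeta}\times\mathbb{N}_{[1,\overline{k}]}$ almost surely for all $j\in\mathbb{N}_{[i,N]}$, so that $R(\omega(j))\geq\underline{R}(\uns)$ pathwise and the bound follows by taking expectations. The only cosmetic difference is that you unroll the value-iteration recursion \eqref{eq:viimc} to evaluate $\underline{V}_{\adv,i}(\uns)$, whereas the paper reads the same closed form directly off the expectation representation \eqref{eq:lemma_vi1}.
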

\begin{proof}
	From Lemma \ref{lemma:vi}, we know that:
	\begin{equation}\label{eq:lemma_uns1}
		\begin{aligned}
			V_{i}(y) &= \expec_{\pr^{y}_\Y}[\sum_{j=i}^N\gamma^{j-i}R(\omega(j))|\omega(i)=y]\\
			\underline{V}_{\adv,i}(\uns) &= \expec_\adv[\sum_{j=i}^N\gamma^{j-i}\underline{R}(\tilde{\omega}(j))|\tilde{\omega}(i)=\uns]=\sum_{j=i}^N\gamma^{j-i}\underline{R}(\uns) = \sum_{j=i}^N\gamma^{j-i}\min\limits_{(x,s)\in \real^{n_\zeta}\times\mathbb{N}_{[1,\overline{k}]}}\hspace{-7mm}R(x,s)
		\end{aligned}
	\end{equation}
	where in the second equation, the second equality comes from the fact that $\uns$ is absorbing for any $\adv\in\Pi$ and the third equality comes from \eqref{eq:underline_R}. 
	
	From Remark \ref{rem:zero_tau}, since $y\in\real^{n_\zeta}\times\mathbb{N}_{[1,\overline{k}]}$ we can deduce that for all $j\in\mathbb{N}_{[i,N]}$ we have:
	\begin{equation*}
		\pr_{\Y_N}(\omega(j)\in\real^{n_\zeta}\times\mathbb{N}_{[1,\overline{k}]}|\omega(i)=y)=1
	\end{equation*}
	Thus, we have:
	\begin{equation}
		\min\limits_{(x,s)\in \real^{n_\zeta}\times\mathbb{N}_{[1,\overline{k}]}}\hspace{-7mm}R(x,s) \leq R(\omega(j)), \text{ } \text{for all }j\in\mathbb{N}_{[i,N]} \text{ a.s.}
	\end{equation}
	where a.s. means ``almost surely". By combining the above equation with \eqref{eq:lemma_uns1}, equation \eqref{eq:lemma_uns_eq} follows.
\end{proof}

\begin{lemma}\label{lemma:uns-int}
	Given any adversary $\adv\in\Pi$ and any $y\in Q$:
	\begin{equation}\label{eq:lemma_uns_int}
		\underline{V}_{\adv,i}(\uns)\int_{\uns}T(dy'|y)\leq \int_{\uns}V_{i}(y')T(dy'|y)
	\end{equation}
\end{lemma}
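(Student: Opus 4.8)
The plan is to reduce the claimed inequality~\eqref{eq:lemma_uns_int} to the pointwise bound of Lemma~\ref{lemma_uns} via monotonicity of the integral, after disposing of the $s'=0$ slice of the kernel. Since $\underline{V}_{\adv,i}(\uns)$ is a constant independent of the integration variable $y'$, I would first bring it inside the integral, writing $\underline{V}_{\adv,i}(\uns)\int_{\uns}T(dy'|y)=\int_{\uns}\underline{V}_{\adv,i}(\uns)\,T(dy'|y)$, so that \eqref{eq:lemma_uns_int} is equivalent to the statement $\int_{\uns}\big[V_i(y')-\underline{V}_{\adv,i}(\uns)\big]\,T(dy'|y)\geq 0$.

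The crux is then to argue that the integrand is nonnegative $T(\cdot|y)$-almost everywhere on $\uns$. Unpacking the integral with the convention $\int_{\uns}T(dy'|y)=\sum_{s'\in\mathbb{N}_{[0,\overline{k}]}}\int_{\overline{X}}T((dx',s')|y)$, I would isolate the term $s'=0$. By~\eqref{eq:prob measure2} the measure $T((\cdot,0)|y)$ equals $\pr(\zeta(0;x)\in\cdot\,,\tau(x)=0)$, and Remark~\ref{rem:zero_tau} gives $\pr(\tau(x)=0)=0$; hence this slice is null and contributes nothing to either side. Consequently the integral effectively ranges only over $y'=(x',s')$ with $s'\in\mathbb{N}_{[1,\overline{k}]}$, i.e.\ over $y'\in\real^{n_\zeta}\times\mathbb{N}_{[1,\overline{k}]}$.

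For precisely such $y'$, Lemma~\ref{lemma_uns} supplies $\underline{V}_{\adv,i}(\uns)\leq V_i(y')$ (for $i\in\mathbb{N}_{[1,N]}$), so $V_i(y')-\underline{V}_{\adv,i}(\uns)\geq 0$ holds $T(\cdot|y)$-a.e.\ on $\uns$, and positivity of the integral of a nonnegative function closes the argument. The proof is short; the only delicate point is that Lemma~\ref{lemma_uns} covers only $s'\geq 1$, so the $s'=0$ portion of $\uns$ must be discarded before invoking it — and this is exactly what Remark~\ref{rem:zero_tau} enables. I expect no further obstacle.
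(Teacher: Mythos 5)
Your proposal is correct and follows essentially the same route as the paper's proof: both discard the $\overline{X}\times\{0\}$ slice as a null set via Remark \ref{rem:zero_tau}, then apply the pointwise bound of Lemma \ref{lemma_uns} on $\overline{X}\times\mathbb{N}_{[1,\overline{k}]}$ together with monotonicity of the integral. Your phrasing (moving the constant inside and arguing the integrand is nonnegative $T(\cdot|y)$-a.e.) is just a cosmetic rearrangement of the paper's chain of (in)equalities.
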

\begin{proof}
	We have that:
	\begin{align*}
		\int_{\uns}V_{i}(y')T(dy'|y)=&\\
		\int_{\overline{X}\times\mathbb{N}_{[1,\overline{k}]}}V_{i}(y')T(dy'|y)+\cancel{\int_{\overline{X}\times\{0\}}V_{i}(y')T(dy'|y)}^0\geq&\\
		\underline{V}_{\adv,i}(\uns)\int_{\overline{X}\times\mathbb{N}_{[1,\overline{k}]}}T(dy'|y)=&\\
		\underline{V}_{\adv,i}(\uns)\int_{\overline{X}\times\mathbb{N}_{[1,\overline{k}]}}T(dy'|y)+
		\underline{V}_{\adv,i}(\uns)\underbrace{\int_{\overline{X}\times\{0\}}T(dy'|y)}_0=&\\
		\underline{V}_{\adv,i}(\uns)\int_{\uns}T(dy'|y)
	\end{align*}
	where for crossing out the term $\int_{\overline{X}\times\{0\}}V_{i}(y')T(dy'|y)$ we used the fact that $\int_{\overline{X}\times\{0\}}T(dy'|y)=0$ (due to what is discussed in Remark \ref{rem:zero_tau}), and for the inequality we used Lemma \ref{lemma_uns}.
\end{proof}

Now, we are ready to prove Theorem \ref{main_theorem}:
\begin{proof}[\textbf{Proof of Theorem \ref{main_theorem}}]
	First, we prove the statement for cumulative rewards, and then we show how the proof is adapted for average and multiplicative rewards. We focus on the lower bound as the proof for the upper bound is similar. It suffices to show that there exists an adversary $\adv^*\in\Pi$ such that:
	\begin{equation}\label{eq:wannaprove}
		\expec^{q_0}_{\adv^*}[\underline{g}_{\cum,N}(\tilde{\omega})]\leq\expec_{\pr^{y_0}_\Y}[g_{\cum,N}(\omega)]
	\end{equation}
	
	By employing Lemma \ref{lemma:vi}, specifically equation \eqref{eq:lemma_vi3}, to prove \eqref{eq:wannaprove} it suffices to prove that
	there exists a $\adv^*\in\Pi$ such that for any $q_0\in Q_\imc\setminus\{\uns\}$ and any $y_0\in q_0$:
	\begin{equation}\label{eq:wannaprove2}
		\underline{V}_{\adv^*,0}(q_0)\leq V_{0}(y_0)
	\end{equation}
	Consider the following adversary for all $q\in Q_\imc$, $i\in\mathbb{N}_{[0,N-1]}$:
	\begin{equation}\label{eq:pistar}
		\adv^*(i,q,q') = \left\{\begin{aligned}
			&\int_{q'}T(dy'|y^*_{i}(q)), \text{ if }q\neq\uns\\
			&1, \text{ if }q=q'=\uns,\\
			&0, \text{ otherwise }
		\end{aligned}\right.
	\end{equation}
	where $y^*_i(q)=\argmin_{y\in q}V_i(y)$. Indeed $\adv^*\in\Pi$, since $\sum_{q'\in Q_\imc}\adv^*(i,q,q')=1$ for all $q\in Q_\imc$, and from \eqref{eq:probability_bounds} and \eqref{eq:kernel} it easily follows that $\check{P}(q,q')\leq\adv^*(i,q,q')\leq\hat{P}(q,q')$\footnote{Adopting the transition-kernel notation, it can be written that for $q\in Q_\imc\setminus\{\uns\}$, $\check{P}(q,q') \leq \min_{y\in q}\int_{q'}T(dy'|y)$, for any $q'\in Q_\imc$. Similarly for $\hat{P}$. Indeed it follows that $\check{P}(q,q')\leq\adv^*(i,q,q')\leq\hat{P}(q,q')$}.
	
	Now, we are ready to prove \eqref{eq:wannaprove2}, by induction. First, from \eqref{eq:underline_R} it is obvious that $\underline{V}_{\adv,N}(q_0)\leq V_{N}(y_0)$ for any $q_0\in Q_\imc\setminus\{\uns\}$ and any $y_0\in q_0$, since:
	\begin{align*}
		\underline{V}_{\adv,N}(q_0) = \underline{R}(q_0) \leq R(y_0) = V_N(y_0)
	\end{align*}
	Assume that $\underline{V}_{\adv,i}(q_0)\leq V_i(y_0)$ for any $q_0\in Q_\imc\setminus\{\uns\}$ and any $y_0\in q_0$, for some $i\in\mathbb{N}_{[1,N]}$. Then:
	\begin{equation*}
		\begin{aligned}
			\underline{V}_{\adv,i-1}(q_0) =&\\ \underline{R}(q_0) + \gamma\hspace{-2mm}\sum_{q'\in Q_\imc}\hspace{-3mm}\underline{V}_{\adv,i}(q')\adv(i-1, q_0,q')=&\\
			\underline{R}(q_0) + \gamma\sum_{q'\in Q_\imc\setminus\{\uns\}}\underline{V}_{\adv,i}(q')\int_{q'}T(dy'|y^*_{i-1}(q_0))+
			\gamma \underline{V}_{\adv,i}(\uns) \int_{\uns}T(dy'|y^*_{i-1}(q_0))\leq&\\
			\min_{y\in q_0}R(y) + \gamma\hspace{-5mm} \sum_{q'\in Q_\imc\setminus\{\uns\}}\min_{y\in q'}(V_i(y))\int_{q'}T(dy'|y^*_{i-1}(q_0))+
			\gamma \underline{V}_{\adv,i}(\uns) \int_{\uns}T(dy'|y^*_{i-1}(q_0))\leq&\\
			\min_{y\in q_0}R(y) + \gamma\hspace{-5mm} \sum_{q'\in Q_\imc\setminus\{\uns\}}\int_{q'}V_i(y')T(dy'|y^*_{i-1}(q_0))+
			\gamma \int_{\uns}V_i(y')T(dy'|y^*_{i-1}(q_0))\leq&\\
			R(y^*_{i-1}(q_0)) + \gamma\sum_{q'\in Q_\imc}\int_{q'}V_i(y')T(dy'|y^*_{i-1}(q_0))=&\\
			R(y^*_{i-1}(q_0)) + \gamma\int_{Q}V_i(y')T(dy'|y^*_{i-1}(q_0))=&\\
			V_{i-1}(y^*_{i-1}(q_0))=\min_{y\in q_0}V_{i-1}(y)
		\end{aligned}
	\end{equation*}
	where:
	\begin{itemize}
		\item in the first step we used \eqref{eq:viimc}; in the second step we used the definition \eqref{eq:pistar} of $\adv^*$;
		\item in the third step we used that $\underline{R}(q_0)=\min_{y\in q_0}R(y)$ (from \eqref{eq:underline_R}) and that $\underline{V}_{\adv,i}(q_0)\leq V_i(y_0)$ for any $q_0\in Q_\imc\setminus\{\uns\}$ and any $y_0\in q_0$ (from the induction assumption);
		\item in the fourth step we used that $\min_{y\in q'}(V_i(y))\leq V(y')$ for all $y'\in q'$, and the inequality given by Lemma \ref{lemma:uns-int};
		\item in the sixth step we used that $\bigcup_{q'\in q_\imc}q' = Q$, in the seventh step we used \eqref{eq:vi_system}, and in the last step we used that $y^*_{i-1}(q_0)=\argmin_{y\in q_0}V_{i-1}(y)$.
	\end{itemize}
	Hence, since $\underline{V}_{\adv,i-1}(q_0)\leq\min_{y\in _0}V_{i-1}(y)$, we have that \eqref{eq:wannaprove2} is proven by induction, thus proving \eqref{eq:wannaprove}.
	
	Only thing remaining is to explain how this proof generalizes to average and multiplicative rewards. The average reward is very simple, as it is just the time-average of a cumulative reward with $\gamma = 1$: $\expec[g_{\avg,N}(\omega)] = \tfrac{1}{N+1}\expec[\sum_{0}^NR(\omega(i))]$. Finally, for multiplicative rewards, only thing that changes w.r.t. cumulative rewards is the value iteration, which becomes:
	\begin{equation}\label{eq:viimc_mul}
		\begin{aligned}
			&\underline{V}_{\adv,N}(q) = \underline{R}(q)\\
			&\underline{V}_{\adv,i}(q) = \underline{R}(q) \cdot\sum_{q'\in Q_\imc}\hspace{-3mm}\underline{V}_{\adv,i+1}(q')\adv(i, q,q')
		\end{aligned}
	\end{equation}
\end{proof}

\subsection{Proofs of Statements from Section \ref{sec:transitions}}\label{sec:proofs_transitions_theorems}
\begin{proof}[\textbf{Proof of Lemma \ref{lemma:log-concave}}]
	This proof draws inspiration from the proof of \cite[Proposition 2]{Luca_hscc_2019}. Let us first prove log-concavity of the following simpler case:
	\begin{equation*}
		g(x) = \int_{S}\normal(dz|x,\Sigma)
	\end{equation*}
	with $S\subseteq\real^n$ not dependent on $x$. Observe that:
	\begin{equation*}
		g(x) = \int_{S-\{x\}}\normal(dz|0,\Sigma),
	\end{equation*}
	where $S-\{x\}$ is still a convex set as a mere translation of $S$. Then, $g(x)$ can be written as:
	\begin{equation*}
		g(x) = P(S-\{x\}),
	\end{equation*}
	where $P(\cdot)$ is a probability measure over $\borel(\real^n)$ induced by the distribution $\mathcal{N}(0,\Sigma)$. Since $\mathcal{N}(z|0,\Sigma)$ is log-concave, from \cite[Theorem 2]{prekopa1973logarithmic} we know that $P$ is a log-concave measure, meaning that for every pair of convex sets $S_1,S_2\subseteq\real^n$ and any $\lambda\in(0,1)$:
	\begin{equation}\label{eq:logconcmeasure}
		P(\lambda S_1 + (1-\lambda)S_2)\geq (P(S_1))^\lambda(P(S_2))^{1-\lambda}
	\end{equation}
	Moreover, for any $x_1,x_2\in\real^n$ and any $\lambda\in(0,1)$ we have:
	\begin{equation}\label{eq:logconcmeasure2}
		\begin{aligned}
			\lambda(S-\{x_1\})+(1-\lambda)(S-\{x_2\})=&\\ \{\lambda(y- x_1):y\in S\} + \{(1-\lambda)(w- x_1):w\in S\}=&\\
			\{\lambda(y- x_1) + (1-\lambda)(w- x_1): y,w\in S\}=&\\
			\{\lambda y + (1-\lambda)w -\lambda x_1 - (1-\lambda)x_2:y,w\in S \}=&\\
			\{\lambda y + (1-\lambda)w: y,w\in S\}-\lambda\{x_1\}-(1-\lambda)\{x_2\} =&\\
			S -\lambda\{x_1\}-(1-\lambda)\{x_2\}\quad&
		\end{aligned}
	\end{equation}
	where the last equality is because $v=\lambda y + (1-\lambda)w$ is a convex combination of any two points $y,w\in S$ and $S$ is convex\footnote{Since $S$ is convex, then for any two $y,w\in S$ and any $\lambda\in(0,1)$ we have that $v=\lambda y + (1-\lambda)w\in S$. Thus, for a given $\lambda,$ $\{\lambda y + (1-\lambda)w: y,w\in S\}\subseteq S$. But, also, $S = \{\lambda y + (1-\lambda)y\in S: y\in S\} \subseteq \{\lambda y + (1-\lambda)w: y,w\in S\}$. Thus, it has to be $S = \{\lambda y + (1-\lambda)w: y,w\in S\}$.}. 
	
	Finally, for any $x_1,x_2\in\real^n$ and any $\lambda\in(0,1)$ we have:
	\begin{align*}
		g(\lambda x_1 + (1-\lambda)x_2)=&\\
		P\Big(S - \lambda \{x_1\} + (1-\lambda)\{x_2\}\Big)=&\\
		P\Big(\lambda(S-\{x_1\})+(1-\lambda)(S-\{x_2\})\Big)\geq&\\
		\Big(P(S-\{x_1\})\Big)^\lambda \Big(P(S-\{x_2\})\Big)^{(1-\lambda)}=&\\
		(g(x_1))^\lambda(g(x_2))^{1-\lambda}\quad&
	\end{align*}
	where in the second equality we used \eqref{eq:logconcmeasure2} and for the inequality we used \eqref{eq:logconcmeasure}. Thus, it follows that $g(x)$ is log-concave. 
	
	For the general case, since $S(x)\subseteq \real^m$ is linear on $x$ and convex, then it can be written as $S(x) = S' + \{Gx\}$, where $S'\subseteq\real^m$ is convex and $G\in\real^{m\times n}$. Thus we have:
	\begin{align*}
		h(x) = \int_{S(x)}\normal(dz|f(x),\Sigma) &= \int_{S' + \{Gx\}}\normal(dz|f(x),\Sigma) \\&= \int_{S'}\normal(dz|f(x)-Gx,\Sigma)\\&=g(f(x)-Gx)
	\end{align*}
	where $g(x) = \int_{S'}\normal(dz|x,\Sigma)$. The function $h(x) = g(f(x)-Gx)$ is log-concave as the composition of the log-concave function $g(x)$ with the affine function $f(x)-Gx$.
\end{proof}
\bibliography{stoch_abs_bib.bib} 
\bibliographystyle{IEEEtran} 

\end{document}